\documentclass[psamsfonts]{article}
\usepackage{amssymb,amsmath, amsthm, setspace, color, graphicx, fullpage}
\usepackage{mathpazo, mathptmx, flexisym}
\usepackage{fancyhdr}
\usepackage[authoryear]{natbib}
\usepackage{url}
\usepackage{breqn}
\usepackage{functan}
\usepackage[margin=1in]{geometry} 
\usepackage{setspace}             
\usepackage[all, cmtip]{xy}
\usepackage[english]{babel}
\usepackage{graphicx}             
\usepackage{float}
\usepackage[section]{placeins}
\usepackage{mathtools}
\usepackage{etoolbox}

\usepackage[pdfborder={0 0 0},final=true,colorlinks=true,linkcolor=magenta,citecolor=blue]{hyperref}
\usepackage[capitalize,nameinlink,noabbrev]{cleveref}
\numberwithin{equation}{section}
\DeclareMathSizes{12}{14}{14}{10}             
\newcommand{\tab}{\hspace*{2em}}              

\newcommand{\bm}[1]{\mathbf{#1}}                     
\newcommand{\diadt}[2]{t^{(#1)}_{#2}}                

\Macro{xj}{\bm{x}_{\diadt{n}{j}}}                              
\Macro{xjplus1}{\bm{x}_{\diadt{n}{{j+1}}}}                              
\Macro{xTj}{\bm{x}^T_{\diadt{n}{j}}}                           
\Macro{xTjplus1}{\bm{x}^T_{\diadt{n}{{j+1}}}}                           
\Macro{zj}{\bm{z}_{\diadt{n}{j}}}                              
\Macro{zjplus1}{\bm{z}_{\diadt{n}{{j+1}}}}                              
\Macro{zTj}{\bm{z}^T_{\diadt{n}{j}}}                           
\Macro{zTjplus1}{\bm{z}^T_{\diadt{n}{{j+1}}}}                           
\Macro{Mxj}{M_{\bm{x}_{\diadt{n}{j}}}}                         
\Macro{Mxjplus1}{M_{\bm{x}_{\diadt{n}{{j+1}}}}}                         
\Macro{Mzj}{M_{\bm{z}_{\diadt{n}{j}}}}                         
\Macro{Mzjplus1}{M_{\bm{z}_{\diadt{n}{{j+1}}}}}
\Macro{SxO}{S\times\Omega}                           
\Macro{SigmaSxO}{\mathfrak{S}\times\mathfrak{O}}     
\Macro{Lp}{L^p_{}(X,\mathfrak{B}(X),m)}              
\Macro{L2}{L^2_{}(X,\mathfrak{B}(X),m)}
\Macro{LinftyS}{L^{\infty}_{}(S \times \Omega,\mathfrak{S}\times\mathfrak{O},\mu)}
\Macro{LinftyS*}{L^{\infty}_{}(S \times \Omega,\mathfrak{S}\times\mathfrak{O},\mu)^*}
\Macro{Mvof}{\mathfrak{M}_{v \circ f}^{}}              
\Macro{ImPhi}{\text{Im}\;\phi}                         
\Macro{KerPhi}{\text{Ker}\;\phi}                       
\Macro{GQKerPhi}{G\backslash\text{Ker}\phi}            
\Macro{CoMap}{\alpha\circ\delta\circ\gamma}             
\Macro{GrpA}{(G,*)}                                    
\Macro{GrpO}{(G^\prime,\diamond)}                      
\theoremstyle{plain}
\newtheorem{thm}{Theorem}[section]

\newtheorem{prop}[thm]{Proposition}
\theoremstyle{plain}
\newtheorem{assumption}[thm]{Assumption}      
\theoremstyle{definition}

\theoremstyle{definition}

\theoremstyle{remark}

\crefname{thm}{theorem}{theorems}
\Crefname{thm}{Theorem}{Theorems}
\crefname{cor}{corollary}{corollaries}
\Crefname{cor}{Corollary}{Corollaries}
\crefname{lem}{lemma}{lemmas}
\Crefname{lem}{Lemma}{Lemmas}
\crefname{prop}{proposition}{propositions}
\Crefname{prop}{Proposition}{Propositions}
\crefname{assumption}{assumption}{assumptions}
\Crefname{assumption}{Assumption}{Assumptions}
\crefname{defn}{definition}{definitions}
\Crefname{defn}{Definition}{Definitions}
\crefname{rem}{remark}{remarks}
\Crefname{rem}{Remark}{Remarks}

\newif\ifinappendix
\AtBeginEnvironment{thm}{\singlespacing}
\AtEndEnvironment{thm}{\ifinappendix\singlespacing\else\onehalfspacing\fi}
\AtBeginEnvironment{cor}{\singlespacing}
\AtEndEnvironment{cor}{\ifinappendix\singlespacing\else\onehalfspacing\fi}
\AtBeginEnvironment{lem}{\singlespacing}
\AtEndEnvironment{lem}{\ifinappendix\singlespacing\else\onehalfspacing\fi}
\AtBeginEnvironment{prop}{\singlespacing}
\AtEndEnvironment{prop}{\ifinappendix\singlespacing\else\onehalfspacing\fi}
\AtBeginEnvironment{assumption}{\singlespacing}
\AtEndEnvironment{assumption}{\ifinappendix\singlespacing\else\onehalfspacing\fi}
\AtBeginEnvironment{assumptionletter}{\singlespacing}
\AtEndEnvironment{assumptionletter}{\ifinappendix\singlespacing\else\onehalfspacing\fi}
\AtBeginEnvironment{defn}{\singlespacing}
\AtEndEnvironment{defn}{\ifinappendix\singlespacing\else\onehalfspacing\fi}
\AtBeginEnvironment{rem}{\singlespacing}
\AtEndEnvironment{rem}{\ifinappendix\singlespacing\else\onehalfspacing\fi}
\AtBeginEnvironment{proof}{\singlespacing}
\AtEndEnvironment{proof}{\ifinappendix\singlespacing\else\onehalfspacing\fi}
\AtBeginEnvironment{userdefined}{\singlespacing}
\AtEndEnvironment{userdefined}{\ifinappendix\singlespacing\else\onehalfspacing\fi}

\begin{document}\singlespacing
\title{\textbf{Equivalent Behavioral Martingale Measure}}
\author{G. Charles-Cadogan \thanks{School of Accounting and Finance, University of Leicester; and Institute for Innovation and Technology Management, Ted Rogers School of Management, Toronto Metropolitan University. e-mail: \textcolor[rgb]{0.00,0.00,1.00}{\href{mailto:gocadog@gmail.com}{gocadog@gmail.com}}; Tel: (786) 329-5469.} \\~\\Working Paper\\Comments welcome}
\date{\today}
\maketitle
\begin{abstract}
\noindent The paper places the Samuelson--Merton ``util-prob'' construction and preference-based state-price interpretation inside a continuous-time equivalent-martingale-measure framework. The contribution is not the existence of a marginal-utility pricing kernel, but a compensated stochastic-exponential representation of the ``util-prob'' density. Behavioral weighting by marginal utility becomes an equivalent martingale measure only after the drift generated by risk aversion and prudence is compensated. Under positivity, integrability, normalization, and martingale conditions, the behavioral density is generated by the Brownian kernel $-A_t\sigma_t$, where $A_t$ is Arrow--Pratt risk attitude and $\sigma_t$ is state volatility. The required compensator identifies the finite-variation term that must be removed before marginal-utility weighting defines a true martingale density. The framework preserves put--call parity under a common admissible kernel and yields a Cox--Ross risk-premium identity linking option and underlying premia through volatility exposures. Empirically, the paper maps option-implied and physical densities into a signed Arrow--Pratt decomposition, with positive curvature interpreted as risk aversion and negative curvature as risk-seeking or speculative tilt. In a panel of liquid index ETF and single-name options, arbitrage-penalized EBMM has lower pooled holdout squared-error loss than arbitrage-penalized SVI when both models receive the same total-variance positivity and butterfly-density penalties, although slice-level results remain mixed. The exercise is best read as a density-ratio admissibility test rather than a replacement for production volatility-surface estimation.
\\
\\
JEL Codes: C00, G11, G12, G13\\
AMS 2020 Subject Classification: 60G20, 60J25, 60J60, 60J65
\end{abstract}

\tableofcontents
\onehalfspacing
\section{\sc{Introduction}}\label{sec:introduction}
In the textbook Black--Scholes--Merton formula for a non-dividend-paying European option with known rate, spot, strike, and maturity, volatility is the remaining input inferred from option prices.\footnote{The Black--Scholes--Merton options pricing formula for a call option $C(S,K,r,T-t)$ where $S$ is stock price, $K$ is strike price, $r$ is risk-free rate, and $T-t$ is time to maturity, is given by
\begin{equation}\label{BSM}
\begin{split}
C(S,K,r,T-t) = S\Phi(d_1)-Ke^{(-r(T-t) )} \Phi(d_2) \\
d_1=\frac{\log(S/K)+(r + \frac{1}{2}\, \sigma^2)(T-t)}{\sigma\sqrt{T-t}} \quad
d_2=\frac{\log(S/K)+(r - \frac{1}{2}\, \sigma^2)(T-t)}{\sigma\sqrt{T-t}}
\end{split}
\end{equation}
where $\Phi$ is the standard normal cumulative distribution function and $\sigma$ is the volatility of the underlying.} Taking the Black--Scholes--Merton formula as a benchmark for plain-vanilla option pricing, implied volatility is a market-implied risk-neutral volatility parameter, often interpreted as a forward-looking summary of volatility compensation and state-price weighting. Market commentary often treats implied volatility as a gauge of investor sentiment or risk appetite, but the interpretation used here is narrower: option prices identify a pricing-kernel weighting over states.

\tab One of the empirical regularities of implied volatility is a U-shaped pattern or skew which takes the form of a smile or smirk. See e.g., \citet[Chapter 16]{Hull2006}; \citet[Chapter 4]{Rebonat1999}. Because this pattern was not observed before the crash of 1987, some analysts believe that it reflects ``crash-o-phobia''
\footnote{See e.g., \citet[p. 775]{Ruben1994} (``The market's pricing of index options since the crash seems to indicate an increasing crash-o-phobia, a phenomenon that we will subsequently document in other ways.'')}
, i.e. it is a behavioral response to market crash.

\tab This paper does not claim that marginal-utility pricing kernels are new. Its contribution is to show when the Samuelson--Merton ``util-prob'' density can be represented as a true equivalent martingale density. The required object is not raw marginal utility, but a compensated stochastic exponential whose Brownian kernel is $-A_t\sigma_t$, where $A_t$ is the Arrow--Pratt risk-attitude process and $\sigma_t$ is state volatility. This representation gives a martingale-admissibility filter for behavioral interpretations of option-implied smiles and smirks.

\tab The goal of this paper is to provide a structural interpretation of volatility smiles and smirks by introducing an equivalent behavioral martingale measure (``EBMM"), and the behavioral restrictions required for it to hold. The term ``behavioral'' is used in the preference-based sense of state-dependent marginal-utility weighting; it does not imply a violation of no-arbitrage or a rank-dependent probability distortion. Thus EBMM is behavioral only in the sense of heterogeneous or state-dependent marginal utility; it remains fully compatible with no-arbitrage pricing. \citet[pg. ~161]{Merton1973a} laid a foundation for this approach when he stated (parenthetically) that the Black-Scholes-Merton options pricing formula is functionally equivalent to the earlier \citet{SamuelsonMerton1969} utility dependent behavioral options pricing formula--if the ``util-prob'' density $dQ(z)$
\footnote{\citet[p.~8, note~2]{Marsh1984} summarizes the Samuelson--Merton ``util-prob'' density as
\[
  dQ(z)=\frac{U^\prime(z^*)\,dP(z)}{E_P[U^\prime(z^*)]},
\]
where $z^*$ denotes the return generated by the expected-utility-maximizing portfolio. Thus $Q$ is the physical probability measure reweighted by normalized marginal utility.}
in the following options price formula is replaced by a lognormal density.
\begin{equation}\label{SamuelsonMerton1969}
C(S,K,r,T-t)=e^{-r(T-t)} \int_{\frac{K}{S}}^{\infty}(ZS-K)dQ(Z;T-t)
\end{equation}

\tab Here $dQ$ is a risk adjusted ``util-prob'' distribution dependent on both risk preferences and aggregate supplies, while the distribution in \cite{BS1973} is the objective distribution of returns on the common stock. In fact, \citet{SamuelsonMerton1969} ``util-prob'' formula ``will be the same [as Black--Scholes] only in the special case when $dQ$ is a lognormal density with the variance of $\log(Z)$ equal to $\sigma^2 [(T-t)]$," Merton, \textit{supra}. In particular, \citet{SamuelsonMerton1969} plainly state `The probability-cum-utility function Q(z;n) has all the properties of a probability distribution, but it weights the probability of each outcome so to speak by the marginal utility of wealth in that outcome,''  \citet[pg. ~168]{Merton1992}. Thus $Q(z;n)$ is a marginal-utility-weighted probability measure. It is related to, but conceptually distinct from, the rank-dependent probability weighting scheme in \citet{Quiggin1982,Quiggin1993}. Empirical research has shown that returns are not lognormally distributed
\footnote{The thick-tail property of stock returns is well documented in the empirical finance literature.}.
 So, \citet{SamuelsonMerton1969} provides a foundation for behavioral options pricing 
by virtue of utilizing subjective probability measures constructed from utility metrics.\footnote{  \cite{KahnTver1979, TverKahn1992} also introduce a probability weight function to account for gains and losses relative to a reference point. Their scheme involves employing \citet{Quiggin1982} approach by ranking outcomes and extracting comonotonic probability weights for risky prospects.} Evidently, the objectivist versus subjectivist approach to risk and uncertainty extends to options pricing. We extend \citet{SamuelsonMerton1969} to equivalent martingale measures in order to see how preference parameters interact in the Black-Scholes-Merton formula by and through volatility. This is tantamount to recovering risk aversion from a state-price density.\footnote{See \citet[pg. ~627]{BreLitzen1978} and \citet{Jac2000}.}

\tab The paper makes four more limited and testable claims. First, the Samuelson--Merton ``util-prob'' density is an equivalent measure only when marginal utility is positive, integrable, and normalized. Second, after It\^o correction, the compensated marginal-utility density is a stochastic exponential generated by $-\lambda_t$, where $\lambda_t=A_t\sigma_t$ is the Brownian risk price. Third, if an option and its underlying are priced by the same state-price kernel, the Cox--Ross risk-premium identity follows directly from equality of Brownian risk prices. Fourth, volatility smiles and smirks can be represented as asymmetric mixtures of positive and negative components of the Arrow--Pratt process across moneyness states. These claims place the paper close to stochastic discount factors, marginal-utility pricing kernels, equivalent martingale measures, utility-based pricing, and risk-aversion recovery from option prices. Its narrower contribution is to express the Samuelson--Merton ``util-prob'' density as a compensated stochastic exponential and to use the resulting martingale restrictions as an admissibility filter for behavioral interpretations of option-implied smiles.

\subsection{Contribution and relation to prior work}\label{subsec:contribution_prior_work}
\tab The paper is not meant to replace empirical risk-aversion recovery from option prices. It formalizes a bridge between that literature and the older ``util-prob'' construction. In the language of state-price densities, option data identify a pricing kernel or risk-neutral density relative to a physical density. This is the state-price recovery logic in \citet{BreLitzen1978}, \citet{JacRub1996}, and \citet{Jac2000}. Jackwerth-style recovery infers risk aversion from state-price and physical densities under utility restrictions; nonparametric and semiparametric state-price estimators fit the density directly \citep{AitSahaliaLo2000,FrassoEilers2022}. Recent work continues this recovery agenda in commodity, crypto, horizon-dependent, and volatility-risk-premium settings \citep{ChristoffersenJacobsPan2022,Cheng2026,AndriesEisenbachSchmalz2024,Duarte2024}. In the language used here, the same object can be represented as a compensated marginal-utility density. The novelty is therefore mathematical and interpretive: the paper shows exactly which drift compensation and equivalence conditions are needed before marginal utility can be treated as an equivalent martingale measure. Relative to non-behavioral smile parametrizations, the EBMM does not claim that hyperbolic risk-attitude components are the unique way to fit an implied-volatility surface. Its narrower contribution is to attach the curvature and skew of such a surface to an admissible marginal-utility density, so that recovered smiles, smirks, and heterogeneous-agent mixtures can be interpreted inside one density-based framework.

\tab The state-preference language is related to the higher-moment preference tradition in \citet{KrausLitzenberger1976}, while the density-recovery implementation follows the option-implied state-price logic of \citet{BreLitzen1978}.

\tab This also clarifies the empirical comparison. The option-implied risk-aversion literature typically recovers a state-dependent pricing kernel or implied risk-aversion function from the ratio of risk-neutral to physical densities \citep{AitSahaliaLo2000,Jac2000,RosEngle2000}. EBMM uses the same recoverability logic, but maps the recovered density ratio into an explicit stochastic-exponential kernel with interpretable Arrow--Pratt components. In the present hyperbolic specification, $\widehat A_{\mathrm{averse}}>0$ is the risk-averse curvature component, $\widehat A_{\mathrm{seek}}<0$ is the signed risk-seeking or speculative component, and zero is the risk-neutral boundary. Thus the empirical advantage is not merely fitting an option smile; it is extracting a signed risk-attitude decomposition subject to martingale admissibility.

\tab The paper's contribution is therefore threefold. First, it gives explicit sufficient conditions under which the Samuelson--Merton ``util-prob'' density becomes an equivalent martingale density. Second, it derives the compensator that converts marginal utility into a Dol\'eans--Dade stochastic exponential and identifies the corresponding Brownian risk price as Arrow--Pratt risk attitude times state volatility. Third, it uses this representation as an empirical admissibility filter: an option-implied smile or smirk receives a behavioral interpretation only if the recovered density ratio can be represented by a positive, normalized, true martingale density.

\tab Ross recovery provides a complementary identification route. Under Markovian and transition-independence restrictions, \citet{Ross2015} shows how state prices can recover the physical transition law and the pricing kernel from option prices. EBMM asks a different question. Conditional on a physical density, whether estimated separately or recovered by a Ross-style procedure, EBMM tests whether the pricing-kernel ratio can be generated by a compensated marginal-utility process whose Brownian risk price is $\lambda_t=A_t\sigma_t$ and whose density is a true stochastic-exponential martingale. Thus Ross recovery identifies a transition-based pricing kernel under Markov restrictions; EBMM identifies the utility-curvature representation, compensator, sign decomposition, and martingale-admissibility restrictions of a candidate kernel without requiring the Markov transition structure to be the source of identification.

\tab The rest of the paper proceeds as follows. \Cref{sec:Model} sets up the Samuelson--Merton ``util-prob'' construction and the Arrow--Pratt risk-attitude process. \Cref{sec:EBMM} derives the compensated equivalent behavioral martingale measure, states the marginal-utility restrictions, and extends the construction to heterogeneous-agent mixtures. \Cref{sec:RiskManagement} derives the Cox--Ross risk-premium identity, develops the smile/smirk representation, and reports the option-panel calibration, admissibility diagnostics, and predictive check. \Cref{sec:Conclusion} concludes. \Cref{app:topology} provides a brief convex-analytic motivation for mixtures of pricing measures.

\section{\sc{The Model}}\label{sec:Model}
Let $(\Omega,\mathfrak{F},\mathbb{F},P)$ be a complete filtered probability space satisfying the usual conditions, where $\mathbb{F}=\{\mathfrak{F}_t\}_{0\leq t\leq T}$. The paper uses $Y_t$ as the wealth or return state entering marginal utility. A more general economic state vector may be denoted by $X_t$, but whenever Arrow--Pratt curvature is computed the scalar state is $Y_t=g(X_t)$. Thus $Y_t$ is the payoff-relevant projection of the general state. Let $\sigma_t=\sigma(Y_t)$ denote the state-volatility process relevant for the diffusion representation below. In the empirical section, its option-implied counterpart is estimated from the implied-volatility surface. Let
\begin{equation}\label{eq:AP_process_def}
  A_t=A(Y_t)=-\frac{U^{\prime\prime}(Y_t)}{U^\prime(Y_t)}
\end{equation}
denote the Arrow--Pratt absolute risk-aversion process. Both $\sigma_t$ and $A_t$ are assumed to be progressively measurable.
Throughout the martingale-density construction, $A_t$ may be signed; positive values correspond to local risk aversion and negative values to local risk seeking, while admissibility is imposed on the stochastic-exponential kernel $A_t\sigma_t$.

\tab The object needed below is not pointwise equality between volatility and risk attitude. The paper uses a reduced-form measurable representation
\begin{equation}\label{eq:stoch_equiv}
  A_t=f(\sigma_t)+\varepsilon_t,
\end{equation}
where $f$ is a Borel measurable function and $\varepsilon_t$ is an innovation with $E[\varepsilon_t\mid\sigma_t]=0$ when the conditional expectation exists. This is an economic restriction rather than a topological one: implied volatility is treated as a market summary of state prices, while $A_t$ is the risk-attitude process consistent with those state prices. The representation is used below only to motivate the risk-management interpretation; the EBMM existence result does not require exact equality between $A_t$ and $f(\sigma_t)$.

\begin{assumption}[Risk-attitude dynamics]\label{ass:risk_attitude_dynamics}
The Arrow--Pratt process is recurrent and strong Markov. Recurrence means that local excursions of risk attitude return to the admissible risk region; the strong Markov property means that future evolution depends on the current state rather than on the distant past.\footnote{See \cite{McCau2007} on It\^o processes with finite memory.}
\end{assumption}
\subsection{Extended Samuelson--Merton ``util-prob'' options pricing}\label{subsec:extended_samuelson_merton}

\tab Following \cite{SamuelsonMerton1969}, let $Z=(Z_1(\omega),Z_2(\omega),\ldots,Z_n(\omega))$ denote terminal gross returns for $n$ assets and let $P$ be their physical probability distribution. Let $\mathbb{F}=\{\mathfrak{F}_t\}_{t\geq0}$ be a filtration of $\mathfrak{F}$. To connect this one-period return vector to the continuous-time notation used below, write $S_t$ for an asset price and $X_t=\log(S_t/S_0)$ for its log-return state. Under the lognormal benchmark,\footnote{See \citet[pp. ~273, 275]{Hull2006}}
 it is known  that if the stock price process is given by
\begin{equation}\label{Sproc}
\frac{dS(t,\omega)}{S(t,\omega)} =\mu dt+\sigma dW(t,\omega)
\end{equation}
then It\^o's formula gives the log-return process
\begin{equation}\label{Zproc}
dX(t,\omega)=\left(\mu-\frac{1}{2} \sigma^2 \right)dt+\sigma dW(t,\omega).
\end{equation}
Terminal gross returns can then be written as $Z_T=S_T/S_0=\exp(X_T)$. This distinction keeps the Samuelson--Merton integration variable $Z$ as a return level while using $X_t$ for the continuous-time log-return state.
The investor's problem constitutes maximizing the expected utility $U \in C^{2+\delta} (Z), \, \delta >0$, on her portfolio. For exposition, we assume that for t-fixed the investor's asset allocation scheme is $(w_1,w_2,\dotsc,w_n), \sum_{j=1}^n w_j =1, \, w_j \geq 0 $, and one of the assets is numeraire.\footnote{If the portfolio weights are not fixed, then we have a more complicated Hamilton-Jacobi-Bellman equation to solve and the salient point we are trying to make will be buried in technical detail. Arguably, we can view the investor's problem as a series of 1-period optimization, in which case the optimal allocation $w^*_{} (\omega)$ in each period, when indexed by $t\in[0,T] $, can be treated as a sample from the portfolio process $w(t,\omega),\mathfrak{F}_t; 0 \leq t \leq T< \infty $ where one asset is risk-free and used as numeraire.}
The investor's problem is
\begin{equation}\label{INVP}
  \begin{split}
    J(w;t,Z) &= E_P\left[U\left(\sum_{j=1}^n w_j Z_j(t)\right)\right] \\
    \text{s.t.}\quad &\sum_{j=1}^n w_j=1,\qquad w_j\geq0.
  \end{split}
\end{equation}
Let $W_t(w)=\sum_{j=1}^n w_jZ_j(t)$ and suppose that $U^\prime(W_t(w^*))$ is integrable and strictly positive. With the Kuhn--Tucker Lagrangian
\begin{equation}\label{eq:Lagrangian_clean}
  \mathcal{L}(w,\kappa,\mu)=E_P[U(W_t(w))]-\kappa\left(\sum_{j=1}^n w_j-1\right)+\sum_{j=1}^n\mu_jw_j,
\end{equation}
where $\mu_j\geq0$ and $\mu_jw_j^*=0$, the first-order condition is
\begin{equation}\label{ShadPri}
  E_P\left[Z_j(t)U^\prime(W_t(w^*))\right]=\kappa-\mu_j.
\end{equation}
For an interior allocation, $\mu_j=0$ and the marginal-utility weighted expected return is equalized across assets. With binding no-short-sale constraints, the shadow prices $\mu_j$ create wedges across assets.

\tab The Samuelson--Merton ``util-prob'' density is therefore the normalized marginal-utility density
\begin{equation}\label{RadNyk}
  \frac{dQ^U}{dP}=\frac{U^\prime(W_t(w^*))}{E_P[U^\prime(W_t(w^*))]}.
\end{equation}
It is a probability density if $0<E_P[U^\prime(W_t(w^*))]<\infty$, and it is equivalent to $P$ if $U^\prime(W_t(w^*))>0$ almost surely. Under this density,
\begin{equation}\label{eq:Q_expectation_clean}
  E_{Q^U}[Z_j(t)]=\frac{\kappa-\mu_j}{E_P[U^\prime(W_t(w^*))]}.
\end{equation}
Thus the economic content of the ``util-prob'' transformation is precise: marginal utility reweights states, while the Kuhn--Tucker multipliers record the effect of portfolio constraints.

\subsection{A recurrent benchmark for the Arrow--Pratt risk process}\label{subsec:LocalTime}
The following proposition gives a mathematically consistent recurrent benchmark for the Arrow--Pratt risk process.
\noindent The EBMM existence theorem below does not require this specific Ornstein--Uhlenbeck structure. The proposition is included only as a tractable recurrent benchmark for simulations and for the risk-management interpretation of locally mean-reverting risk attitudes.
\begin{prop}[Time-changed Arrow--Pratt risk process]\label{prop:APriskProcess}
Let $V$ be a Brownian motion and define
\begin{equation}\label{APtimeChange}
  R_t=e^{-t}V_{e^{2t}-1},\qquad t\geq0.
\end{equation}
Then $R_t$ is an Ornstein--Uhlenbeck diffusion satisfying
\begin{equation}\label{eq:OU_AP}
  dR_t=-R_t\,dt+\sqrt{2}\,dB_t,
\end{equation}
for a Brownian motion $B$. Its infinitesimal generator is
\begin{equation}\label{APgenerator}
  \mathcal{L}f(x)=-x f^\prime(x)+f^{\prime\prime}(x).
\end{equation}
Consequently $R_t$ is recurrent and Markov. When the application restricts attention to locally risk-averse agents, a positive Arrow--Pratt process can be obtained from $R_t$ by an increasing positive transform, for example $A_t=\exp(R_t)$. The later smile/smirk decomposition allows signed local curvature, with $A_t<0$ interpreted as a risk-seeking or speculative state component.
\end{prop}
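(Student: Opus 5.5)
The plan is to work directly with the time-changed representation \eqref{APtimeChange} and derive the SDE via It\^o's formula applied to a time-changed martingale. Set $\tau(t)=e^{2t}-1$, a deterministic, smooth, strictly increasing clock with $\tau(0)=0$. Then $M_t:=V_{\tau(t)}$ is a continuous martingale with respect to the time-changed filtration $\mathcal{G}_t=\mathfrak{F}^V_{\tau(t)}$, and its quadratic variation is $\langle M\rangle_t=\tau(t)=e^{2t}-1$. Define
\begin{equation*}
  B_t=\int_0^t e^{-s}\,\frac{dM_s}{\sqrt{2}} .
\end{equation*}
This is a continuous local martingale with $\langle B\rangle_t=\int_0^t e^{-2s}\,\tfrac{1}{2}\,d\tau(s)=\int_0^t e^{-2s}e^{2s}\,ds=t$. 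L\'evy's characterization then identifies $B$ as a standard $(\mathcal{G}_t)$-Brownian motion. This is the step I expect to require the most care, namely checking that the time change is legitimate and that $B$ is adapted to the correct filtration. Because the clock is deterministic, I would handle it simply by noting that $M$ has independent Gaussian increments with variance $\tau(t)-\tau(s)$.

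Next, apply the product rule to $R_t=e^{-t}M_t$. Since $e^{-t}$ has finite variation, no cross-variation term appears, and
\begin{equation*}
  dR_t=-e^{-t}M_t\,dt+e^{-t}\,dM_t=-R_t\,dt+\sqrt{2}\,dB_t ,
\end{equation*}
which is \eqref{eq:OU_AP}. The generator \eqref{APgenerator} follows from It\^o's formula for $f\in C^2_c$:
\begin{equation*}
  df(R_t)=\bigl(-R_tf^\prime(R_t)+f^{\prime\prime}(R_t)\bigr)dt+\sqrt{2}f^\prime(R_t)\,dB_t .
\end{equation*}
Taking expectations and letting $t\downarrow0$ gives $\mathcal{L}f=-xf^\prime+f^{\prime\prime}$.

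For the Markov and recurrence claims, I would use two standard facts. First, the SDE has Lipschitz coefficients, so it has a unique strong solution, and solutions of such SDEs are strong Markov. Second, for recurrence I would use the scale function of a one-dimensional diffusion. With drift $b(x)=-x$ and diffusion coefficient $a(x)=2$, the scale density is $s^\prime(x)=\exp\!\bigl(-\int_0^x 2b/a\bigr)=e^{x^2/2}$. Hence $s(\pm\infty)=\pm\infty$, and the diffusion is recurrent; in fact it is positive recurrent, with invariant law $N(0,1)$. Alternatively, one can read recurrence off the representation itself, since $R_t$ is exactly $N(0,1)$-distributed for every $t$.

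Finally, the transform statement follows directly. The map $x\mapsto e^x$ is a strictly increasing homeomorphism of $\mathbb{R}$ onto $(0,\infty)$, so $A_t=\exp(R_t)$ is a positive process. Because an injective continuous map preserves the Markov property and visits to open sets, $A_t$ inherits both the strong Markov property and recurrence on $(0,\infty)$. The remark about signed curvature is interpretive and needs no proof.
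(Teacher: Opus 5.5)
Your derivation of the SDE follows the paper's argument. The paper also sets $u(t)=e^{2t}-1$, writes $dV_{u(t)}=\sqrt{u'(t)}\,dB_t=\sqrt{2}e^t\,dB_t$ via the deterministic time-change theorem, and applies the product rule to $e^{-t}V_{u(t)}$. You construct $B$ explicitly as $\int_0^t e^{-s}\,dM_s/\sqrt{2}$ and check $\langle B\rangle_t=t$ with L\'evy's characterization, which is exactly the content of that theorem.

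Where you genuinely differ is recurrence. The paper appeals informally to the full-support invariant Gaussian law and mean reversion. Your route is an actual proof by the standard scale-function criterion: the scale density $s'(x)=e^{x^2/2}$ gives $s(\pm\infty)=\pm\infty$, and the finite speed measure $e^{-x^2/2}\,dx$ gives positive recurrence with invariant law $N(0,1)$. You also check that the strong Markov and recurrence properties pass to $A_t=\exp(R_t)$, which the paper leaves implicit.

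One sentence should be deleted: your \emph{alternative} claim that $R_t$ is exactly $N(0,1)$ for every $t$ is false. With $V_0=0$ you have $R_0=0$ and $\mathrm{Var}(R_t)=e^{-2t}(e^{2t}-1)=1-e^{-2t}$. So $R$ is the Ornstein--Uhlenbeck process started at the origin; it converges to its invariant law $N(0,1)$ but is not distributed as it. The stationary representation would instead be $e^{-t}V_{e^{2t}}$. In any case, stationary $N(0,1)$ marginals would not by themselves imply recurrence: the process equal for all $t$ to a single $N(0,1)$ random variable has those marginals and visits nothing. The scale-function argument already proves the claim, so nothing is lost by dropping that remark.
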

\begin{proof}
Let $u(t)=e^{2t}-1$. Then $du=2e^{2t}dt$ and $dV_{u(t)}=\sqrt{u'(t)}\,dB_t=\sqrt{2}e^t dB_t$ for a Brownian motion $B$ under the deterministic time-change theorem. Since $R_t=e^{-t}V_{u(t)}$, It\^o's product rule gives $dR_t=-e^{-t}V_{u(t)}dt+e^{-t}\sqrt{2}e^t dB_t=-R_t\,dt+\sqrt{2}\,dB_t$. The generator in \eqref{APgenerator} follows from the drift coefficient $-x$ and diffusion coefficient $\sqrt{2}$. The one-dimensional Ornstein--Uhlenbeck diffusion is recurrent because its invariant Gaussian law has full support and the process is mean reverting.
\end{proof}

\section{\sc{Equivalent Behavioral Martingale Measure}}\label{sec:EBMM}
\subsection{Measure-theoretic motivation}\label{subsec:measure_theoretic_motivation}
The EBMM construction below only needs the standard measure-theoretic insight that admissible pricing measures are probability measures equivalent to the physical measure and represented by positive martingale density processes. The convex-analytic discussion in \Cref{app:topology} is motivational background for mixtures of pricing measures, not a foundation for the equivalence result. The main argument in the proof proceeds directly through Radon--Nikodym densities, stochastic exponentials, and Girsanov kernels.

\subsection{Martingale representation of equivalent behavioral density processes}\label{subsec:MartingaleRepresentationEBMM}
Consistent with \cref{sec:Model}, let $Y_t$ denote the wealth or return state entering marginal utility and suppose
\begin{equation}\label{eq:Yprocess}
  dY_t=b_t\,dt+\sigma_t\,dW_t,
\end{equation}
where $b_t$ and $\sigma_t$ are progressively measurable and $\int_0^T\sigma_t^2dt<\infty$ almost surely. Assume $U\in C^3$, $U^\prime>0$, and define
\begin{equation}\label{eq:APdefClean}
  A_t=-\frac{U^{\prime\prime}(Y_t)}{U^\prime(Y_t)},\qquad
  a_t=\frac{U^{\prime\prime\prime}(Y_t)}{U^\prime(Y_t)}.
\end{equation}
The term $U'''/U'$ captures the curvature channel associated with precautionary saving and prudence; see \citet{Sandmo1970} and the formal prudence measure in \citet{Kimball1990}. It\^o's formula gives
\begin{equation}\label{ItoMU}
  \frac{dU^\prime(Y_t)}{U^\prime(Y_t)}
  =\left[-A_t b_t+\frac{1}{2}a_t\sigma_t^2\right]dt-A_t\sigma_t\,dW_t.
\end{equation}
Equivalently,
\begin{equation}\label{eq:logMU}
  d\log U^\prime(Y_t)
  =\left[-A_t b_t+\frac{1}{2}(a_t-A_t^2)\sigma_t^2\right]dt-A_t\sigma_t\,dW_t.
\end{equation}
Define the compensator
\begin{equation}\label{eq:MUcompensator}
  \Gamma_t=-A_t b_t+\frac{1}{2}(a_t-A_t^2)\sigma_t^2.
\end{equation}
Then
\begin{equation}\label{EBMM}
  L_t=\frac{U^\prime(Y_t)}{U^\prime(Y_0)}
      \exp\left(-\int_0^t \Gamma_s\,ds-\frac{1}{2}\int_0^t A_s^2\sigma_s^2\,ds\right)
      =\mathcal{E}\left(-\int_0^t A_s\sigma_s\,dW_s\right).
\end{equation}
Here $\mathcal{E}(\cdot)$ denotes the Dol\'eans--Dade stochastic exponential. For a continuous Brownian local martingale $N_t=\int_0^t\theta_s\,dW_s$,
\begin{equation}\label{eq:DoleansDadeDefinition}
  \mathcal{E}(N)_t
  =
  \exp\left(N_t-\frac{1}{2}\langle N\rangle_t\right)
  =
  \exp\left(\int_0^t\theta_s\,dW_s-\frac{1}{2}\int_0^t\theta_s^2\,ds\right).
\end{equation}
The second equality follows because \eqref{eq:logMU} implies $U^\prime(Y_t)/U^\prime(Y_0)=\exp\{\int_0^t \Gamma_s\,ds-\int_0^t A_s\sigma_s dW_s\}$. Hence the additional quadratic-variation correction in \eqref{EBMM} is essential. Without it the process is the exponential of a local martingale, but not the Dol\'eans--Dade exponential.

\begin{thm}[Existence and equivalence of the EBMM]\label{thm:EBMM_equivalence}
Assume $U\in C^3$, $U^\prime(Y_t)>0$ almost surely, $\int_0^T A_s^2\sigma_s^2ds<\infty$ almost surely, and
\begin{equation}\label{eq:NovikovClean}
  E_P\left[\exp\left(\frac{1}{2}\int_0^T A_s^2\sigma_s^2\,ds\right)\right]<\infty.
\end{equation}
Then $L_t$ in \eqref{EBMM} is a strictly positive true $P$-martingale with $E_P[L_T]=1$. The measure $Q^B$ defined by $dQ^B/dP=L_T$ is equivalent to $P$ on $\mathfrak{F}_T$.
\end{thm}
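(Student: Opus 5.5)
The proof runs in three steps. First, identify $L_t$ with a Dol\'eans--Dade exponential. Second, upgrade it from a local to a true martingale via Novikov's criterion. Third, read off equivalence from strict positivity.

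\emph{Step 1: the identity in \eqref{EBMM}.} Since $U\in C^3$ and $U^\prime(Y_t)>0$, the function $\log U^\prime$ is $C^2$ along the path. It\^o's formula applied to $\log U^\prime(Y_t)$ with \eqref{eq:Yprocess} then gives \eqref{eq:logMU}. The ingredients are $(\log U^\prime)^\prime=-A$ and $(\log U^\prime)^{\prime\prime}=a-A^2$. Integrating, we obtain
\[
\frac{U^\prime(Y_t)}{U^\prime(Y_0)}=\exp\left(\int_0^t\Gamma_s\,ds-\int_0^tA_s\sigma_s\,dW_s\right).
\]
We need $\int_0^t|\Gamma_s|ds<\infty$ a.s. for this to make sense, and it holds because the left side is finite and the stochastic integral is well defined. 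The latter uses $\int_0^TA_s^2\sigma_s^2ds<\infty$ a.s. Multiplying by the exponential correction in \eqref{EBMM} cancels the $\Gamma$ term and leaves
\[
\exp\left(N_t-\tfrac12\langle N\rangle_t\right)=\mathcal{E}(N)_t,\qquad N_t=-\int_0^tA_s\sigma_s\,dW_s,
\]
as in \eqref{eq:DoleansDadeDefinition}. Strict positivity is immediate from the exponential form.

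\emph{Step 2: true martingale.} The process $N$ is a continuous local martingale with $\langle N\rangle_T=\int_0^TA_s^2\sigma_s^2ds<\infty$ a.s., and $\mathcal{E}(N)$ satisfies $d\mathcal{E}(N)=\mathcal{E}(N)\,dN$. It is therefore a positive local martingale, hence a supermartingale by Fatou's lemma, with $E_P[L_t]\le 1$. Condition \eqref{eq:NovikovClean} is exactly Novikov's condition $E_P[\exp(\tfrac12\langle N\rangle_T)]<\infty$. Invoking Novikov's theorem, $\mathcal{E}(N)$ is a uniformly integrable martingale on $[0,T]$, so $E_P[L_T]=E_P[L_0]=1$.

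I expect this to be the main obstacle only in the sense that one must not try to prove the martingale property from the raw marginal-utility expression: $U^\prime(Y_t)$ itself carries the finite-variation drift $\Gamma$ and is generally not a martingale. The clean route is to work with the stochastic-exponential form from Step 1 and cite Novikov rather than attempt integrability bounds on $U^\prime$ directly. If one wanted a self-contained argument, I would localize with $\tau_n=\inf\{t:\langle N\rangle_t\ge n\}$ and use the standard Novikov proof via $\mathcal{E}(\alpha N)$ and H\"older. That is routine, so I would simply cite it.

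\emph{Step 3: equivalence.} Define $Q^B(F)=E_P[L_T\mathbf{1}_F]$ for $F\in\mathfrak{F}_T$. Since $L_T\ge0$ and $E_P[L_T]=1$, $Q^B$ is a probability measure with $Q^B\ll P$. Because $L_T>0$ $P$-a.s., $Q^B(F)=0$ forces $L_T\mathbf{1}_F=0$ a.s., hence $P(F)=0$. Thus $P\ll Q^B$ as well, with $dP/dQ^B=1/L_T$. So $Q^B\sim P$ on $\mathfrak{F}_T$, and by the martingale property its density process is $E_P[L_T\mid\mathfrak{F}_t]=L_t$.
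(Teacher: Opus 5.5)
Your proof is correct and follows the paper's route: Novikov's criterion makes the stochastic exponential a true martingale with unit mean, and strict positivity of $L_T$ gives mutual absolute continuity. Your Step 1 re-derives the identity $L_t=\mathcal{E}\bigl(-\int_0^t A_s\sigma_s\,dW_s\bigr)$, which the paper establishes in the text just before the theorem. Your Step 3 also separates the two directions of absolute continuity more cleanly than the paper: $Q^B\ll P$ comes from the density alone, and $P\ll Q^B$ from $L_T>0$.
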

\begin{proof}
Novikov's condition in \eqref{eq:NovikovClean} implies that the stochastic exponential $\mathcal{E}(-\int_0^t A_s\sigma_s dW_s)$ is a true martingale with unit expectation. Stochastic exponentials are strictly positive for finite quadratic variation, so $L_T>0$ almost surely. Hence $Q^B\ll P$. Conversely, if $P(A)>0$, then $Q^B(A)=E_P[L_T\mathbf{1}_A]>0$ because $L_T>0$ almost surely; therefore $P\ll Q^B$. The two measures have the same null sets, which proves equivalence.
\end{proof}
The Novikov condition is sufficient but not necessary. It is strong when either $A_t$ or $\sigma_t$ is unbounded. It is satisfied, for example, when $A_t\sigma_t$ is bounded on $[0,T]$. A standard weakening is Kazamaki's criterion: for the local martingale $N_t=-\int_0^t A_s\sigma_s\,dW_s$, uniform integrability of $\exp\{N_t/2\}$ as a submartingale is sufficient for $\mathcal{E}(N)_t$ to be a true martingale. The results below therefore require a true stochastic-exponential density; Novikov is used as the clean sufficient condition for exposition.

\tab Use the standard convention in which the Brownian risk-price process is
\begin{equation}\label{eq:lambda_def}
  \lambda_t=A_t\sigma_t.
\end{equation}
Then the behavioral density can be written as
\begin{equation}\label{EMM}
  L_t=\mathcal{E}\left(-\int_0^t \lambda_s\,dW_s\right)
  =\exp\left(-\int_0^t \lambda_s\,dW_s-\frac{1}{2}\int_0^t \lambda_s^2\,ds\right).
\end{equation}
This is the precise sense in which the EBMM embeds Arrow--Pratt risk attitude inside the martingale density: the density is generated by the negative stochastic integral of risk attitude times state volatility.

\subsection{Arrow--Pratt curvature, prudence, and martingale admissibility}\label{subsec:IdentifyingRestrictionsOnMU}
Pratt's regularity condition gives the link between the curvature of marginal utility and the dynamics of the Arrow--Pratt process. For a deterministic state $x$,
\begin{equation}\label{PrattRateCond}
  A^\prime(x)=A^2(x)-\frac{U^{\prime\prime\prime}(x)}{U^\prime(x)}.
\end{equation}
It is therefore useful to define the dimensionless Pratt adjustment index
\begin{equation}\label{eq:alphaClean}
  \alpha(x)=\frac{U^{\prime\prime\prime}(x)}{U^\prime(x)A^2(x)},\qquad A(x)\neq0.
\end{equation}
Then
\begin{equation}\label{eq:rprimeAlpha}
  A^\prime(x)=(1-\alpha(x))A^2(x).
\end{equation}
Thus $\alpha$ is not equal to prudence $U^{\prime\prime\prime}/U^\prime$; rather, it rescales that prudence based curvature by $A^2$. In the stochastic notation of \eqref{eq:APdefClean}, $a_t=U^{\prime\prime\prime}(Y_t)/U^\prime(Y_t)$ and $\alpha_t=a_t/A_t^2$ whenever $A_t\neq0$. If absolute risk aversion is locally increasing, then $\alpha(x)<1$; if $U^{\prime\prime\prime}\geq0$, then $\alpha(x)\geq0$.

\tab When the state is stochastic, a reduced-form Arrow--Pratt diffusion may be written as
\begin{equation}\label{APriskRate}
  dA_t=(1-\alpha_t)A_t^2\,dt+\eta_t\,dB_t,
\end{equation}
where $B$ is a Brownian motion and $\eta_t$ is progressively measurable. Equation \eqref{APriskRate} is an SDE; the formal expression $dA_t/dt$ is not used because Brownian paths are nowhere differentiable. This reduced-form diffusion is not needed for \Cref{thm:EBMM_equivalence}; it is a local parametrization of Arrow--Pratt dynamics used to interpret the prudence adjustment factor. Combining \eqref{eq:rprimeAlpha} with the compensated EBMM density in \eqref{EBMM} shows that the finite-variation term to be compensated is
\begin{equation}\label{AdmisRes}
  \int_0^t\left[-A_s b_s+\frac{1}{2}\sigma_s^2 A_s^2(\alpha_s-1)\right]ds.
\end{equation}
The condition is not that marginal utility itself must be constant. Rather, the normalized and compensated marginal-utility process must be a true martingale. Under Novikov's condition \eqref{eq:NovikovClean}, the admissible EBMM density is
\begin{equation}\label{UprimeCond}
  L_t=\frac{U^\prime(Y_t)}{U^\prime(Y_0)}
  \exp\left\{\int_0^t\left[A_s b_s-\frac{1}{2}\alpha_s\sigma_s^2 A_s^2\right]ds\right\}.
\end{equation}
Equation \eqref{UprimeCond} follows from substituting $a_t=\alpha_tA_t^2$ into $\Gamma_t$ and combining the It\^o drift correction with the Dol\'eans--Dade quadratic-variation correction.
This corrected expression is the main mathematical restriction: behavioral weighting by marginal utility becomes an equivalent martingale measure only after the drift generated by risk aversion and prudence is compensated.

\subsection{Weighted Average Martingale Measure for Heterogeneous Returns}\label{subsec:WeightedAvgMartingaleMeasure}
In this subsection, we apply the weighted-average martingale idea of \cite{KazTsu1967} to finite mixtures of equivalent behavioral martingale densities for heterogeneous returns. For each asset $j$, let
\begin{equation}
  M_j(t)=\mathcal{E}\left(-\int_0^t A_{j,s}\sigma_{j,s}\,dW_{j,s}\right)
\end{equation}
be the behavioral stochastic exponential, and assume the corresponding Novikov condition holds. Let $v_j>0$ be raw behavioral weights and define normalized weights
\begin{equation}\label{eq:normalizedWeights}
  \pi_j^{(N)}=\frac{v_j}{W_N},\qquad W_N=\sum_{k=1}^N v_k.
\end{equation}
Then $\sum_{j=1}^N\pi_j^{(N)}=1$. Economically, $\pi_j^{(N)}$ may be interpreted as the market share of agent type $j$, the probability that the marginal trader is of type $j$, or a liquidity/trading-volume weight attached to that component density. The point is not that the raw $v_j$ are primitive probabilities, but that the normalized mixture must be a probability density before it can define an equivalent measure. The finite behavioral mixture density is
\begin{equation}\label{EBMMwgt1}
  \frac{dQ^{(N)}}{dP}=\sum_{j=1}^N\pi_j^{(N)}M_j(T).
\end{equation}
Because each $M_j$ is nonnegative with $E_P[M_j(T)]=1$, the mixture in \eqref{EBMMwgt1} is also nonnegative and integrates to one. The convex-analytic note in  \Cref{app:topology} gives a short motivation for reading this construction as a finite mixture of admissible pricing densities, while the admissibility result below rests on normalization and the martingale property. The associated density process is
\begin{equation}\label{EBMMwgt2}
  L_t^{(N)}=E_P\left[\frac{dQ^{(N)}}{dP}\bigg|\mathfrak{F}_t\right]
          =\sum_{j=1}^N\pi_j^{(N)}M_j(t),
\end{equation}
provided the weights are deterministic or $\mathfrak{F}_0$-measurable. Thus the right-hand side is a weighted martingale.

\begin{thm}[Weighted EBMM]\label{thm:weightedEBMM}
Suppose $M_j(t)$, $j=1,\ldots,N$, are nonnegative $P$-martingales with $M_j(0)=1$ and $E_P[M_j(T)]=1$. If $\pi_j^{(N)}\geq0$ and $\sum_{j=1}^N\pi_j^{(N)}=1$, then $L_t^{(N)}=\sum_{j=1}^N\pi_j^{(N)}M_j(t)$ is a nonnegative $P$-martingale with $L_0^{(N)}=1$. Hence $dQ^{(N)}/dP=L_T^{(N)}$ defines an equivalent behavioral martingale measure whenever $L_T^{(N)}>0$ almost surely.
\end{thm}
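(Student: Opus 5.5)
The argument is elementary: it uses linearity of conditional expectation and the fact that convex combinations of martingales are martingales. Throughout I take the weights $\pi_j^{(N)}$ to be deterministic, or $\mathfrak{F}_0$-measurable, as stipulated just before \eqref{EBMMwgt2}.

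\emph{Integrability and adaptedness.} Each $M_j(t)$ is adapted and integrable, with $E_P[M_j(t)]=E_P[M_j(0)]=1$ by the martingale property. Since $0\le\pi_j^{(N)}\le 1$, we get $E_P|L_t^{(N)}|\le\sum_j\pi_j^{(N)}E_P[M_j(t)]=1<\infty$. If the weights are merely $\mathfrak{F}_0$-measurable, boundedness by $1$ still gives integrability.

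\emph{Martingale property.} For $s\le t$, linearity of conditional expectation gives
\[
E_P[L_t^{(N)}\mid\mathfrak{F}_s]=\sum_{j=1}^N\pi_j^{(N)}E_P[M_j(t)\mid\mathfrak{F}_s]=\sum_{j=1}^N\pi_j^{(N)}M_j(s)=L_s^{(N)}.
\]
When the weights are $\mathfrak{F}_0$-measurable, they are $\mathfrak{F}_s$-measurable and bounded, so they can be taken out of the conditional expectation.

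\emph{Nonnegativity and initial value.} Nonnegativity is immediate, because the weights and the $M_j$ are nonnegative. The initial value is $L_0^{(N)}=\sum_j\pi_j^{(N)}M_j(0)=\sum_j\pi_j^{(N)}=1$. Consequently $E_P[L_T^{(N)}]=1$, so $Q^{(N)}(F)=E_P[L_T^{(N)}\mathbf{1}_F]$ defines a probability measure on $\mathfrak{F}_T$ with $Q^{(N)}\ll P$.

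\emph{Equivalence.} Assume in addition that $L_T^{(N)}>0$ almost surely. If $P(F)>0$, then $E_P[L_T^{(N)}\mathbf{1}_F]>0$, since a strictly positive function integrated over a set of positive measure is positive. This gives $P\ll Q^{(N)}$, and the two measures are equivalent, exactly as in the proof of \Cref{thm:EBMM_equivalence}.

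\emph{Remark on the positivity hypothesis.} If some $\pi_j^{(N)}>0$ and the corresponding $M_j$ is a stochastic exponential under Novikov's condition, as in \Cref{thm:EBMM_equivalence}, then $L_T^{(N)}\ge\pi_j^{(N)}M_j(T)>0$ almost surely. So the positivity hypothesis holds automatically in the setting of \eqref{EBMMwgt1}.

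The only point needing care is measurability when the weights are random: they must be $\mathfrak{F}_0$-measurable, otherwise the martingale identity fails. The rest is routine.
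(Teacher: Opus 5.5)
Your proof is correct and follows the paper's argument: linearity of conditional expectation gives the martingale identity, and nonnegativity and unit mass come from the component densities and the normalized weights. You also make explicit several points the paper leaves implicit: integrability, the $\mathfrak{F}_0$-measurability needed when the weights are random, and the equivalence step mirroring \Cref{thm:EBMM_equivalence}. Your remark that $L_T^{(N)}\ge\pi_j^{(N)}M_j(T)>0$ makes the positivity hypothesis automatic in the stochastic-exponential setting is a correct and useful addition.
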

\begin{proof}
Linearity of conditional expectation gives
\begin{equation*}
E_P[L_t^{(N)}\mid\mathfrak{F}_s]=\sum_{j=1}^N\pi_j^{(N)}E_P[M_j(t)\mid\mathfrak{F}_s]=\sum_{j=1}^N\pi_j^{(N)}M_j(s)=L_s^{(N)}
\end{equation*}
for $s\leq t$. Nonnegativity and unit expectation follow from the same properties of the component densities and from the normalization of the weights.
\end{proof}
This result is the finite-mixture analogue of the weighted-average martingale results of \cite{KazTsu1967}. It also clarifies the required normalization: raw behavioral weights may be arbitrary positive numbers, but the density must use normalized weights.
\section{\sc{Risk Management}}\label{sec:RiskManagement}
\subsection{Cox--Ross risk-premium identity and EBMM cancellation}\label{subsec:cox_ross_identity}
According to \citet{SamuelsonMerton1969}, marginal utility weights probabilities. The corrected EBMM construction shows that this weighting becomes a martingale density only after the finite-variation term in \eqref{AdmisRes} has been compensated. The Cox--Ross cancellation claim can be stated directly in terms of the Brownian risk price.

\tab Let the underlying satisfy
\begin{equation}\label{eq:S_under_Q}
  \frac{dS_t}{S_t}=\mu_S(t)dt+\sigma_S(t)dW_t,
\end{equation}
and let an option or derivative value $C_t=C(S_t,t)$ satisfy
\begin{equation}\label{eq:C_under_Q}
  \frac{dC_t}{C_t}=\mu_C(t)dt+\sigma_C(t)dW_t,
\end{equation}
where $\sigma_C(t)=\Delta_t S_t\sigma_S(t)/C_t$ when It\^o's formula is applied to a smooth derivative price. If both assets are priced by the same equivalent martingale measure with Brownian risk-price process $\lambda_t$, then their physical drift premia satisfy
\begin{equation}\label{eq:risk_premia_kernel}
  \mu_S(t)-r_f(t)=\sigma_S(t)\lambda_t,\qquad
  \mu_C(t)-r_f(t)=\sigma_C(t)\lambda_t.
\end{equation}
Dividing the two equations gives the \citet{CoxRoss1976} style cancellation identity
\begin{equation}\label{eq:CoxRossCancel}
  \frac{\mu_C(t)-r_f(t)}{\mu_S(t)-r_f(t)}=\frac{\sigma_C(t)}{\sigma_S(t)}
  =\frac{\Delta_t S_t}{C_t}.
\end{equation}
Thus the common martingale-density effect cancels from the ratio. The EBMM adds behavioral content by substituting $\lambda_t=A_t\sigma_S(t)$ for the underlying. Then
\begin{equation}\label{eq:EBMMRiskPremium}
  \mu_S(t)-r_f(t)=A_t\sigma_S^2(t),\qquad
  \mu_C(t)-r_f(t)=A_t\sigma_S(t)\sigma_C(t).
\end{equation}
Let $A_t^S$ denote the Arrow--Pratt process inferred from the underlying and let $A_t^C$ denote the corresponding process inferred from the option. If an option-specific risk-attitude process $A_t^C$ is inferred from its own risk-price relation $\lambda_t=A_t^C\sigma_C(t)$, equality of the pricing kernel implies
\begin{equation}\label{eq:APRatio}
  A_t^C\sigma_C(t)=A_t^S\sigma_S(t),\qquad
  \frac{A_t^C}{A_t^S}=\frac{\sigma_S(t)}{\sigma_C(t)}.
\end{equation}
Equations \eqref{eq:CoxRossCancel}--\eqref{eq:APRatio} are the precise version of the paper's cancellation claim: the density kernel is common, so it drops out of the option-underlying premium ratio, while the EBMM identifies that kernel with Arrow--Pratt risk attitude times volatility.

\begin{prop}[Cox--Ross cancellation under EBMM]\label{prop:CoxRossEBMM}
Suppose $S_t$ and $C_t$ are driven by the same Brownian motion and are priced by the equivalent measure generated by $L_t$ in \eqref{EBMM}. If $\sigma_S(t)$ and $\sigma_C(t)$ are nonzero, then the risk-premium ratio is given by \eqref{eq:CoxRossCancel}. Under the EBMM compatibility condition, the same ratio can be written in terms of Arrow--Pratt risk-attitude components as \eqref{eq:APRatio}.
\end{prop}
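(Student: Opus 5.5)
The plan is a Girsanov argument on both discounted price processes, followed by algebra. By \Cref{thm:EBMM_equivalence}, under Novikov's condition \eqref{eq:NovikovClean} the density $L_t=\mathcal{E}(-\int_0^t\lambda_s\,dW_s)$ with $\lambda_t=A_t\sigma_t$ is a true martingale, and $Q^B$ is equivalent to $P$. Girsanov's theorem then makes
\[
W_t^{Q}=W_t+\int_0^t\lambda_s\,ds
\]
a $Q^B$-Brownian motion. I will take as part of ``priced by the equivalent measure'' that the discounted prices $e^{-\int_0^t r_f(s)ds}S_t$ and $e^{-\int_0^t r_f(s)ds}C_t$ are $Q^B$-local martingales.

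Substituting $dW_t=dW_t^Q-\lambda_t\,dt$ into \eqref{eq:S_under_Q} and \eqref{eq:C_under_Q} gives
\[
\frac{dS_t}{S_t}=(\mu_S-\sigma_S\lambda_t)\,dt+\sigma_S\,dW_t^Q,
\qquad
\frac{dC_t}{C_t}=(\mu_C-\sigma_C\lambda_t)\,dt+\sigma_C\,dW_t^Q.
\]
Both assets are driven by the same $W$. The discounted processes are continuous local martingales, so their finite-variation parts must vanish $dt\otimes dP$-a.e. This is the uniqueness of the Doob--Meyer (semimartingale) decomposition. It yields $\mu_S-r_f=\sigma_S\lambda_t$ and $\mu_C-r_f=\sigma_C\lambda_t$, which is exactly \eqref{eq:risk_premia_kernel}.

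Since $\sigma_S,\sigma_C\neq0$, I will divide the two identities on the set where $\mu_S-r_f\neq0$, equivalently $\lambda_t\neq0$. This gives the first equality in \eqref{eq:CoxRossCancel}. For the second equality, I apply It\^o's formula to $C(S_t,t)$ with $C\in C^{2,1}$. The diffusion coefficient of $dC_t$ is $\Delta_tS_t\sigma_S$, so $\sigma_C=\Delta_tS_t\sigma_S/C_t$, assuming $C_t>0$. Substituting this gives $\Delta_tS_t/C_t$.

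For \eqref{eq:APRatio}, I use the EBMM compatibility condition. This means that the kernel $\lambda_t$ is represented both as $A_t^S\sigma_S(t)$ and as $A_t^C\sigma_C(t)$. Equating the two representations of the single kernel, which is unique a.e. by the identities already derived (or by uniqueness of the Girsanov kernel given $Q^B$), gives $A_t^C\sigma_C=A_t^S\sigma_S$. Dividing then gives the ratio, whenever $A_t^S\neq0$. Substituting $\lambda_t=A_t\sigma_S$ also recovers \eqref{eq:EBMMRiskPremium}.

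The main obstacle is the interpretive step in the first paragraph, not the algebra. There is a degenerate case: if $\lambda_t=0$ on a set of positive measure, both premia vanish and the ratio in \eqref{eq:CoxRossCancel} is undefined. I will therefore state the identity on the set $\{\lambda_t\neq0\}$. Alternatively, I will state it in the product form $(\mu_C-r_f)\sigma_S=(\mu_S-r_f)\sigma_C$, which holds everywhere.
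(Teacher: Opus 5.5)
Your proposal is correct and follows essentially the same route as the paper. You apply Girsanov with $W_t^Q=W_t+\int_0^t\lambda_s\,ds$, impose vanishing $Q$-drift on the discounted prices to get \eqref{eq:risk_premia_kernel}, divide to get \eqref{eq:CoxRossCancel}, and equate the common kernel $A_t^S\sigma_S(t)=A_t^C\sigma_C(t)$ to get \eqref{eq:APRatio}; your added nondegeneracy conditions ($\lambda_t\neq0$, $A_t^S\neq0$, or the product form $(\mu_C-r_f)\sigma_S=(\mu_S-r_f)\sigma_C$) make explicit what the paper's statement leaves implicit.
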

\begin{proof}
Under Girsanov's theorem for $L_t=\mathcal{E}(-\int_0^t\lambda_s\,dW_s)$, the Brownian motion under the pricing measure is
\[
  W_t^Q=W_t+\int_0^t\lambda_s\,ds,\qquad dW_t=dW_t^Q-\lambda_t\,dt.
\]
Substituting this relation into \eqref{eq:S_under_Q} and \eqref{eq:C_under_Q} and imposing discounted martingale dynamics under $Q$ gives \eqref{eq:risk_premia_kernel}. Division proves \eqref{eq:CoxRossCancel}. The EBMM identity $\lambda_t=A_t\sigma_t$ then gives \eqref{eq:EBMMRiskPremium}; applying the same common kernel to the option and underlying gives \eqref{eq:APRatio}.
\end{proof}
\noindent In multifactor or incomplete settings, the ratio generalizes to equality of exposures to the common priced Brownian components; the scalar cancellation identity above is the one-factor benchmark.

\tab Put--call parity is not altered by EBMM when calls, puts, the underlying, and the bond are priced by the same admissible behavioral martingale density. The behavioral density changes the interpretation of the pricing kernel, but the payoff identity $(S_T-K)^+-(K-S_T)^+=S_T-K$ is pathwise, so the kernel cancels in the parity portfolio. Thus, for European options without dividends, the put--call parity relation is
\begin{equation}\label{eq:PutCallParity}
  C_t-P_t=S_t-Ke^{-r_f(T-t)},
\end{equation}
with the usual dividend adjustment when the underlying pays dividends. Apparent parity violations would therefore be interpreted not as new EBMM pricing formulas, but as evidence of frictions, segmentation, stale quotes, funding constraints, short-sale limits, or failure of the common-kernel admissibility condition. Similarly, a delta-hedged call is locally insulated from first-order underlying price risk: the option's first-order exposure is offset by the underlying, leaving the admissible pricing kernel to determine the drift-adjusted value. This is a trading implementation of risk-neutral valuation, not an assumption of risk-neutral preferences.

\subsection{Smile and smirk representation by risk-attitude components}\label{subsec:SmileAndSmirk}
This paper uses the term $\tau$-welded component for a smooth transition between positive and negative state-dependent risk-attitude components across moneyness. This is a modelling choice rather than a primitive theorem. It is used because the even--odd decomposition gives the minimal distinction needed to represent symmetric smile curvature and asymmetric smirk slope. Let $m=\log(K/S_t)$ denote log-moneyness and let $\tau=T-t$ denote time to maturity. A convenient parsimonious $\tau$-welded basis is
\begin{equation}\label{eq:tau_weld}
  A_\theta(m,\tau)=\beta_0(\tau)+A_{\mathrm{averse}}(\tau)[\cosh(m)-1]+A_{\mathrm{seek}}(\tau)\sinh(m),
\end{equation}
where $\theta=(\beta_0,\beta_+,\beta_-,\ldots)$ collects the parameters that determine the model-implied behavioral density, with $A_{\mathrm{averse}}(\tau)>0$ and $A_{\mathrm{seek}}(\tau)<0$ under the sign convention used in the empirical section. The term $[\cosh(m)-1]$ is even, zero at the money, and controls symmetric curvature. The term $\sinh(m)$ is odd and controls skew. A negative $A_{\mathrm{seek}}$ is the signed risk-seeking or speculative tilt. The functions $\cosh(m)$ and $\sinh(m)$ are not asserted to be unique. Any smooth moneyness function can be decomposed into even and odd parts.\footnote{Let $f$ be smooth on a domain symmetric around zero. Define $f_e(m)=\{f(m)+f(-m)\}/2$ and $f_o(m)=\{f(m)-f(-m)\}/2$. Then $f=f_e+f_o$, $f_e(-m)=f_e(m)$, and $f_o(-m)=-f_o(m)$, so $f_e$ is even and $f_o$ is odd. Smoothness is inherited from $f$ and $f(-m)$. The decomposition is unique because any $f=g+h$ with $g$ even and $h$ odd gives $f(-m)=g(m)-h(m)$; adding and subtracting $f(m)$ and $f(-m)$ recovers $g=f_e$ and $h=f_o$.} The hyperbolic pair is a parsimonious smooth basis: $\cosh(m)-1$ supplies an even curvature term with exponential tail growth, while $\sinh(m)$ supplies the corresponding odd tilt. The maturity functions $\beta_0(\tau)$, $A_{\mathrm{averse}}(\tau)$, and $A_{\mathrm{seek}}(\tau)$ can be treated as smooth term-structure loadings and estimated from option panels, or fixed in an illustrative calibration. Combining \eqref{eq:EBMMRiskPremium} with \eqref{eq:tau_weld} yields the implied-risk-premium surface
\begin{equation}\label{eq:smile_surface}
  \mu_S(m,\tau)-r_f=A_\theta(m,\tau)\sigma_S^2(m,\tau).
\end{equation}
A smile corresponds to the case in which the even component dominates; a smirk corresponds to the case in which the odd component dominates. This is a structural restriction that translates a recovered state-price kernel into a decomposed behavioral risk-attitude surface. Existing empirical smile models may fit the surface directly; the EBMM asks whether the fitted curvature can be read as an admissible marginal-utility density.

\Cref{fig:EBMM_arrow_pratt_surface} plots an illustrative three-dimensional Arrow--Pratt surface generated from smooth maturity loadings in \eqref{eq:tau_weld}. The figure makes explicit the object that the calibration seeks to recover: $A_\theta(m,\tau)$ varies jointly over log-moneyness and time to maturity. The surface height and light grid shading separate regions with higher and lower behavioral risk sensitivity. The surface is schematic, but it clarifies why the model is not simply fitting a single smile curve; it imposes a joint restriction across strike and maturity. \Cref{fig:EBMM_arrow_pratt_surface_contour} gives the corresponding two-dimensional projection. The projection makes it easier to read level sets and to see how slices through the surface generate smile- and smirk-type patterns.

\begin{figure}[!htbp]
\centering
\includegraphics[width=0.72\textwidth]{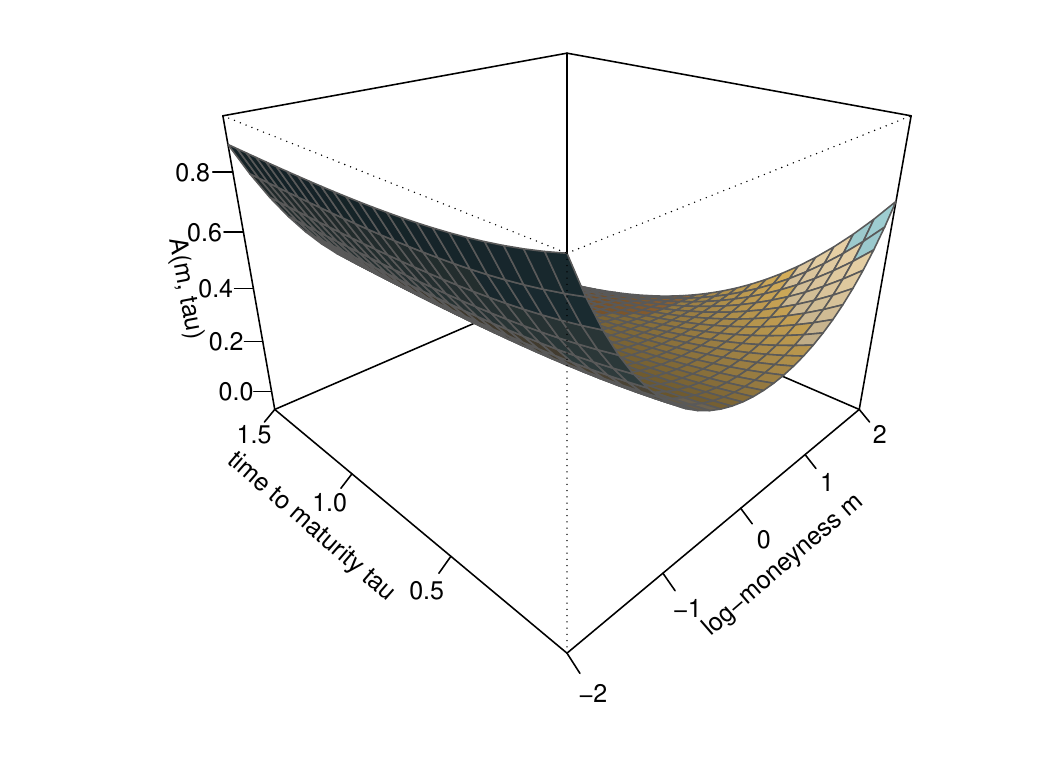}
\caption{Illustrative three-dimensional Arrow--Pratt risk-attitude surface $A_\theta(m,\tau)$. The surface is generated from smooth maturity loadings $\beta_0(\tau)$, $A_{\mathrm{averse}}(\tau)$, and $A_{\mathrm{seek}}(\tau)$ in the $\tau$-welded hyperbolic representation. Height and shading show behavioral risk sensitivity across log-moneyness and time to maturity.}
\label{fig:EBMM_arrow_pratt_surface}
\end{figure}

\begin{figure}[!htbp]
\centering
\includegraphics[width=0.72\textwidth]{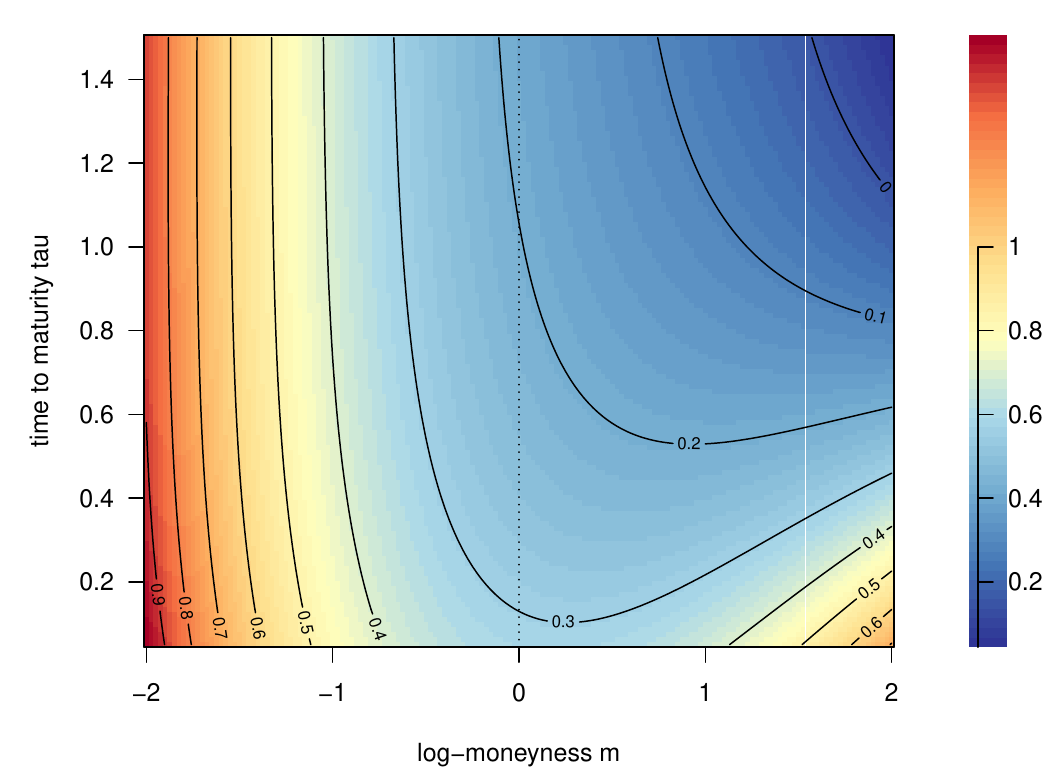}
\caption{Two-dimensional projection of the illustrative Arrow--Pratt risk-attitude surface $A_\theta(m,\tau)$. The contour lines are level sets of behavioral risk sensitivity across log-moneyness and time to maturity.}
\label{fig:EBMM_arrow_pratt_surface_contour}
\end{figure}

\begin{prop}[Smile/smirk decomposition]\label{prop:smile_smirk}
If $A_{\mathrm{averse}}(\tau)>0$ and $A_{\mathrm{seek}}(\tau)=0$, then the risk-attitude component in \eqref{eq:tau_weld} is symmetric in log-moneyness and can generate a volatility smile. If $A_{\mathrm{seek}}(\tau)\neq0$, then the odd hyperbolic component tilts the surface across moneyness and can generate a smirk.
\end{prop}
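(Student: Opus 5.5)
The argument rests on the parity properties of the hyperbolic basis in \eqref{eq:tau_weld}, propagated through the risk-premium surface \eqref{eq:smile_surface} and the Brownian kernel $\lambda=A\sigma$ of \Cref{thm:EBMM_equivalence}. Fix a maturity $\tau$ and treat $A_\theta(\cdot,\tau)$ as a function of log-moneyness $m$. Since $\cosh$ is even and $\sinh$ is odd, the decomposition of the footnote following \eqref{eq:tau_weld} applies. The even part of $A_\theta(\cdot,\tau)$ is $\beta_0(\tau)+A_{\mathrm{averse}}(\tau)[\cosh(m)-1]$. The odd part is $A_{\mathrm{seek}}(\tau)\sinh(m)$. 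By the uniqueness of that decomposition, the odd part vanishes identically if and only if $A_{\mathrm{seek}}(\tau)=0$.

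\emph{Smile case.} Suppose $A_{\mathrm{seek}}(\tau)=0$. Then $A_\theta(m,\tau)=A_\theta(-m,\tau)$ for all $m$, so the function is symmetric. It has a minimum at $m=0$, because $\cosh(m)-1\ge 0$ with equality only at zero. It is strictly convex, since $\partial_m^2 A_\theta=A_{\mathrm{averse}}(\tau)\cosh(m)>0$, and it grows exponentially in both tails. To get from this to a volatility smile, I would read \eqref{eq:smile_surface}, or equivalently the kernel $\lambda=A_\theta\sigma_S$, as the state-dependent weighting that reprices options across strikes.

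Suppose the baseline volatility $\sigma_S(m,\tau)$ is itself symmetric in $m$, with the Black--Scholes flat case as the benchmark. Then the induced kernel is symmetric in $m$. Deep in-the-money and out-of-the-money states then both receive more weight than at-the-money states. This fattens both tails of $Q^B$ relative to the lognormal, which raises implied volatility at both ends. The result is a U-shaped, symmetric smile.

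Because the statement only says ``can generate'', an existence argument suffices. I would exhibit a parameter choice and check admissibility. Take $\beta_0$ and $A_{\mathrm{averse}}$ constant on a compact moneyness range, so that $A_\theta\sigma_S$ is bounded there. Novikov's condition \eqref{eq:NovikovClean} then holds, and \Cref{thm:EBMM_equivalence} yields a genuine equivalent measure.

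\emph{Smirk case.} Suppose $A_{\mathrm{seek}}(\tau)\neq0$. Then $A_\theta(m,\tau)-A_\theta(-m,\tau)=2A_{\mathrm{seek}}(\tau)\sinh(m)$, which is nonzero for every $m\neq0$, so symmetry fails. The slope at the money is $\partial_mA_\theta(0,\tau)=A_{\mathrm{seek}}(\tau)$, which gives the tilt. With the sign convention $A_{\mathrm{seek}}<0$, low-strike states (where $m<0$) receive more weight than high-strike states. The density therefore puts more mass in the left tail and less in the right. That is exactly the downward-sloping smirk in implied volatility. The same bounded-kernel check as in the smile case gives admissibility.

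The main obstacle is the step from the shape of $A_\theta$ to the shape of the implied-volatility curve. Implied volatility is a nonlinear functional of the option price, so symmetry and tilt of the kernel need not transfer exactly. I would first try a local argument around the flat lognormal benchmark, taking $A_{\mathrm{averse}}$ and $A_{\mathrm{seek}}$ small. The first-order change in the call price is $e^{-r\tau}E[(S_T-K)^+\,\delta L_T]$. Dividing by vega gives the first-order change in implied volatility. One then shows that the even or odd component of $\delta L_T$ in $m$ produces an even or odd component of the implied-volatility perturbation. In the log-moneyness coordinates, the Black--Scholes vega and the payoff kernel preserve this parity up to the standard drift shift. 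Since the proposition only asserts ``can generate'', this perturbative existence claim is enough.
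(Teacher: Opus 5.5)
Your parity step is the whole of the paper's proof. The paper notes that $\cosh$ is even and $\sinh$ is odd, and then simply asserts that, through \eqref{eq:smile_surface}, these components become curvature and skew of the implied-volatility surface.

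The genuine problem is in the bridge you add to make that assertion precise. You take the Brownian kernel $\lambda=A_\theta\sigma_S$ of \Cref{thm:EBMM_equivalence} as ``the state-dependent weighting'' and conclude that a symmetric $\lambda$ upweights both tails. Under the paper's own density this is false: the weight on terminal states is an \emph{antiderivative} of $A$, and integration flips parity. Since $A=-(\log U^\prime)^\prime$, the state-dependent factor of $L_T$ in \eqref{EBMM} is $U^\prime(Y_T)/U^\prime(Y_0)=\exp\{-\int_{Y_0}^{Y_T}A(y)\,dy\}$. With $A(y)=\beta_0+A_{\mathrm{averse}}[\cosh y-1]+A_{\mathrm{seek}}\sinh y$ and $Y_0=0$, the log-weight is $-\beta_0y-A_{\mathrm{averse}}(\sinh y-y)-A_{\mathrm{seek}}(\cosh y-1)$. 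So the even, risk-averse component gives an odd log-weight that shifts mass toward low states, which is a smirk. (The $\beta_0$ term alone is only a Girsanov drift shift and leaves a lognormal smile flat.) The odd component gives the even term $-A_{\mathrm{seek}}(\cosh y-1)\geq0$ for $A_{\mathrm{seek}}<0$, which raises both tails. Your perturbative plan inherits this. In the driftless case with constant $\sigma$, $\delta L_T=-\int_0^T\delta\lambda_s\,dW_s=-G(Y_T)+\frac{1}{2}\sigma^2\int_0^T\delta A^\prime(Y_s)\,ds$, with $G^\prime=\delta A$ and $G(0)=0$. This is odd under path reflection when $\delta A$ is even, and even when $\delta A$ is odd. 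Carried out as written, your argument would attach the smile to $A_{\mathrm{seek}}$ and the smirk to $A_{\mathrm{averse}}$, the reverse of the statement.

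The parity transfer you want does hold if \eqref{eq:tau_weld} is read as the log density ratio $\ell_\theta$ itself, as in the paper's second stage \eqref{eq:EBMM_second_stage_AP}, rather than as the Arrow--Pratt coefficient inside $\lambda$; you need to say this explicitly. Under that reading your last step works, and for the even case no first-order argument is needed. For the lognormal baseline $q$, $e^{x/2}q(x)$ is even in $x=\log(S_T/F)$. Multiplying by a normalized even ratio preserves this, and then automatically preserves the forward. That is put--call symmetry, so the smile is exactly symmetric in $\log(K/F)$. For an odd perturbation, $\mathrm{vega}(-k)=e^{-k}\mathrm{vega}(k)$ gives an odd first-order implied-volatility change. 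This requires the perturbation to be mean-zero and forward-preserving, which a pure $\sinh$ term is not, so you must track the correction that restores these constraints. Finally, your admissibility check is incomplete. Boundedness of $A_\theta\sigma_S$ on a compact moneyness range does not give \eqref{eq:NovikovClean}, because the state leaves every compact set and $E[\exp(c\int_0^T\cosh^2(W_s)\,ds)]=\infty$ for all $c>0$. You must cap the basis outside the range; without truncation, $\exp\{A_{\mathrm{averse}}[\cosh x-1]\}$ is not even integrable against a Gaussian physical density.
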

\begin{proof}
The function $\cosh(m)$ is even, so its contribution is the same at $m$ and $-m$. The function $\sinh(m)$ is odd, so its contribution changes sign across $m=0$. Therefore the first component creates symmetric curvature around at-the-money states, while the second creates asymmetric slope. Through \eqref{eq:smile_surface}, these components map into the curvature and skew of the implied-volatility surface.
\end{proof}

\subsection{Calibration to option-market regularities}\label{subsec:market_calibration}
The EBMM representation can be calibrated to option-market data by treating the compensated behavioral density as the object that links the physical distribution of returns to the option-implied distribution. For a fixed maturity $\tau$, let $p_\tau(x)$ denote the physical density of the terminal return or log-return state $x$, estimated from historical or filtered returns, and let $q_\tau(x)$ denote the risk-neutral density recovered from option prices. Under no arbitrage, the normalized state-price density ratio is
\begin{equation}\label{eq:empirical_kernel_ratio}
  \widehat L_\tau(x)=\frac{q_\tau(x)}{p_\tau(x)},\qquad
  \int \widehat L_\tau(x)p_\tau(x)dx=1.
\end{equation}
The discounted stochastic discount factor is $e^{-r_f\tau}\widehat L_\tau(x)$. Thus the empirical counterpart of the EBMM density in \eqref{EBMM} is observable once the physical and risk-neutral densities have been estimated. The risk-neutral density may be recovered from a smoothed option surface using the Breeden--Litzenberger second-derivative relation, or by a semiparametric density estimator as in the state-price recovery literature \citep{BreLitzen1978,AitSahaliaLo2000,FrassoEilers2022}.

\tab A direct calibration proceeds in five steps. First, collect a panel of European-style option prices across strikes $K_i$ and maturities $\tau_j$, and convert strikes into log-moneyness $m_i=\log(K_i/S_t)$. Second, fit an arbitrage-smoothed implied-volatility surface and recover $q_{\tau_j}(x)$ for each maturity; a submission-grade implementation should use an arbitrage-free surface construction such as stochastic volatility inspired (SVI) restrictions of \citet{GatheralJacquier2014}, stochastic alpha, beta, rho (SABR), or another benchmark with static-arbitrage controls. Third, estimate $p_{\tau_j}(x)$ from realized returns over the same horizon, using a nonparametric kernel estimator, a filtered GARCH or stochastic-volatility density, a realized-return bootstrap, or a Ross-recovery physical density. Fourth, form the empirical density ratio $\widehat L_{\tau_j}(x)$ in \eqref{eq:empirical_kernel_ratio}. Fifth, choose the EBMM parameters $\theta=(\beta_0,\beta_+,\beta_-,\ldots)$ so that the model-implied behavioral density matches the recovered density ratio.

\tab For the hyperbolic smile/smirk specification in \eqref{eq:tau_weld}, the cross-sectional calibration can be written as a minimum-distance problem:
\begin{equation}\label{eq:EBMM_calibration_objective}
  \widehat\theta
  =\arg\min_{\theta\in\Theta}
  \sum_{j}\sum_i \omega_{ij}
  \left[\log \widehat L_{\tau_j}(x_i)-\log L^{EBMM}_{\theta}(x_i,\tau_j)\right]^2,
\end{equation}
where $\omega_{ij}$ are liquidity or inverse-noise weights. The model-implied density is normalized by construction,
\begin{equation}\label{eq:EBMM_normalized_density_fit}
  L^{EBMM}_{\theta}(x,\tau)
  =\frac{\exp\{\ell_\theta(x,\tau)\}}
  {\int \exp\{\ell_\theta(u,\tau)\}p_\tau(u)du},
\end{equation}
with $\ell_\theta(x,\tau)$ chosen to be consistent with the compensated marginal-utility density in \eqref{EBMM}. In a one-period cross-sectional implementation, $\ell_\theta$ is the log behavioral kernel implied by the fitted Arrow--Pratt surface $A_\theta(m,\tau)$ and the observed volatility surface. In a dynamic implementation, it is obtained by discretizing the stochastic exponential $\mathcal{E}(-\int A_s\sigma_s dW_s)$ over the return horizon.

\tab The calibration delivers four diagnostics. The first is density fit: whether $L^{EBMM}_{\theta}$ matches the empirical pricing-kernel ratio $q_\tau/p_\tau$ across moneyness and maturity. The second is benchmark discipline: whether the EBMM representation remains competitive against fair arbitrage-controlled smile surfaces, not merely against unstable raw parametrizations. The third is admissibility: whether the fitted kernel satisfies positivity, normalization, and a sufficient martingale condition such as Novikov's criterion over the relevant horizon. For a discretized implementation, a transparent diagnostic analogue is
\begin{equation}\label{eq:empirical_novikov}
  \widehat{\mathcal{N}}_j
  =
  \sum_i \widehat w_{ij}
  \exp\left\{\frac{1}{2}\tau_j\widehat\lambda_{ij}^{\,2}\right\},
  \qquad
  \widehat\lambda_{ij}=\widehat A_\theta(m_{ij},\tau_j)\widehat\sigma_{\mathrm{imp},ij},
\end{equation}
where $\widehat w_{ij}$ are physical-density quadrature weights on the maturity slice. This is a finite-sample consistency screen, not a proof of the continuous-time Novikov condition. Failure is evidence against the fitted EBMM interpretation, while passage is only weak support for admissibility under the maintained discretization. The fourth diagnostic is economic decomposition: whether the recovered structural factors $\widehat A_{\mathrm{averse}}(\tau)$ or $\widehat A_{\mathrm{seek}}(\tau)$ dominate. A dominant even component corresponds to a smile-type regularity, while a dominant odd component corresponds to a smirk-type regularity. This makes the empirical test sharper than visual curve fitting: the EBMM is supported only if the option-implied density ratio can be represented by an admissible compensated marginal-utility density.

\tab The empirical null and alternative can be stated concretely. The null is that the recovered pricing-kernel ratio is either not representable by a positive normalized EBMM density or that any apparent smile/smirk fit violates the martingale-admissibility restrictions. The alternative is that a parameter vector $\theta$ exists such that $L^{EBMM}_{\theta}>0$, $\int L^{EBMM}_{\theta}dP=1$, the fitted kernel satisfies a sufficient stochastic-exponential condition such as Novikov's criterion on the sample horizon, and the EBMM fit is not dominated by standard smile benchmarks under out-of-sample pricing or density-error loss. The model would be falsified, in this implementation, by persistent negative or non-normalizable fitted densities, failure of the admissibility diagnostic, unstable signs of the even/odd components across dates and maturities, or inferior out-of-sample performance relative to arbitrage-free benchmark surfaces such as SVI.

\subsection{Option-panel data calibration}\label{subsec:thetadata_calibration}
As a first market-data check, we obtain data from ThetaData Terminal v3 for a panel of liquid index ETF and single-name options: SPY, QQQ, IWM, AAPL, MSFT, NVDA, AMZN, and TSLA. The corresponding underlying equity prices were obtained from FMPcloud. Before any surface fitting or density-ratio recovery or other data analysis, the option panel is subjected to a bid--ask liquidity filter more fully described below. This subsection should be read as a reproducible diagnostic stress test, not as a final empirical validation. Although the theoretical calibration protocol is stated for European-style options, the diagnostic panel uses listed U.S. ETF and single-name options. The exercise treats early-exercise effects as part of the empirical approximation; a production implementation should use European index options or explicitly correct for American exercise features. The current implementation uses multiple trade dates and listed expirations, pulls both end-of-day chains and historical implied volatilities, and keeps calls and puts separately in the fitted panel. It then fits the hyperbolic EBMM basis
\begin{equation}
  \sigma_{\mathrm{imp}}(m)
  =
  \beta_0+\beta_+[\cosh(m)-1]-\beta_-\sinh(m)+\varepsilon(m),
  \qquad m=\log(K/S),\label{eq:IV_cosh_sinh_rel}
\end{equation}
 to observed implied volatilities. 
 
 \tab As indicated above, the data was subjected to a bid--ask filter before data analysis. The bid--ask spread is a standard measure of option-market liquidity and quote noise
 \citep{Mayhew2002}; accordingly, and following the empirical option-pricing convention of using bid--ask midpoint quotes
 \citep{BakshiCaoChen1997}, the baseline sample applies a relative-spread filter before estimation. Here, we apply a light bid--ask screen before estimation. Quotes must satisfy $bid>0$, $ask>bid$, $7\leq\tau_{days}\leq365$, and $(ask-bid)/mid\leq0.50$, where $mid=(ask+bid)/2$. This screen removes crossed, zero-bid, and extremely stale or noisy quotes, while preserving most of the tail information in the liquid ETF and mega-cap single-name panel. Estimation uses midpoint quotes, and every first-stage benchmark receives the same spread-based liquidity weight
\begin{equation}
  \omega_i=\frac{1}{1+\left((ask_i-bid_i)/mid_i\right)^2}.\label{eq:Spread_liquidity_weight}
\end{equation}
Thus wider but still admissible tail quotes are downweighted rather than automatically discarded. The weighting adjustment is applied symmetrically to the EBMM hyperbolic regression, the quadratic regression, the raw SVI objective, and the arbitrage-penalized EBMM and SVI objectives; it is not an EBMM-specific tuning parameter. 

\tab The benchmark stage estimates the first-stage implied-volatility shape, call-only and put-only decompositions, and holdout errors on the filtered, spread-weighted panel. It then constructs a diagnostic proxy for $\log(q_\tau/p_\tau)$, estimates the physical density from a blend of overlapping-horizon kernel density estimates and an EWMA filtered-innovation bootstrap, and reports finite-sample Novikov and tail-instability screens.

\tab The coefficients in the first-stage implied-volatility regression should not be interpreted as structural Arrow--Pratt factors by themselves. To connect the exercise to \eqref{eq:tau_weld} and \eqref{eq:EBMM_calibration_objective}, the script adds a second-stage constrained minimum-distance fit of the normalized EBMM density kernel to the recovered $\log(q_\tau/p_\tau)$ diagnostic. The second stage estimates $\widehat A_{\mathrm{averse}}>0$ and $\widehat A_{\mathrm{seek}}<0$: $\widehat A_{\mathrm{averse}}$ measures symmetric Arrow--Pratt risk-aversion intensity, while $\widehat A_{\mathrm{seek}}$ measures the signed risk-seeking or speculative component of the odd moneyness tilt, consistent with the sign convention in \citet{HoltLaury2002}. The risk-neutral boundary is $\widehat A_{\mathrm{averse}}=0$, which is excluded by the theoretical restriction but can be approached numerically; such near-boundary estimates are interpreted empirically as local non-identification of the symmetric risk-aversion component rather than as a structural preference parameter equal to zero.
The two empirical stages used in the script are therefore
\begin{align}
  \sigma_{\mathrm{imp},ij}
  &=
  \beta_0+\beta_+[\cosh(m_{ij})-1]-\beta_-\sinh(m_{ij})+\varepsilon_{ij},
  \label{eq:EBMM_first_stage_iv}\\
  (\widehat A_{\mathrm{averse}},\widehat A_{\mathrm{seek}})
  &=
  \arg\min_{A_{\mathrm{averse}}>\,0,\;A_{\mathrm{seek}}<\,0}
  \sum_j\sum_i w_{ij}
  \left[
  \widehat\ell_{ij}
  -
  \left\{
    A_{\mathrm{averse}}[\cosh(m_{ij})-1]+A_{\mathrm{seek}}\sinh(m_{ij})
    -\log Z_j(A_{\mathrm{averse}},A_{\mathrm{seek}})
  \right\}
  \right]^2,
  \label{eq:EBMM_second_stage_AP}
\end{align}
where $\widehat\ell_{ij}=\log\{\widehat q_{\tau_j}(m_{ij})/\widehat p_{\tau_j}(m_{ij})\}$, $w_{ij}$ are physical-density weights, and
\begin{equation}
  Z_j(A_{\mathrm{averse}},A_{\mathrm{seek}})
  =
  \int
  \exp\{A_{\mathrm{averse}}[\cosh(u)-1]+A_{\mathrm{seek}}\sinh(u)\}
  \widehat p_{\tau_j}(u)\,du
\end{equation}
is the normalizing constant that enforces $\int L^{EBMM}_{A_{\mathrm{averse}},A_{\mathrm{seek}}}(u,\tau_j)\widehat p_{\tau_j}(u)du=1$ for each synchronized trade-date/maturity slice.
As a robustness check, we also estimate the same structural parameters by fitting the raw diagnostic ratio $\widehat q_\tau/\widehat p_\tau$ directly, without taking logs. Because the level ratio can be numerically explosive when $\widehat p_\tau$ is small, this check uses a truncated weighted-integral objective: observations are weighted by the physical-density quadrature weights, the empirical ratio is capped at the 99.5\% slice quantile, and residuals are evaluated with a Huber loss.\footnote{Huber loss is a robust error measure that treats small residuals like squared errors and large residuals like absolute errors, preventing outliers from dominating the fit; see \citet{Huber1964}.} The raw-ratio specification is intentionally severe and should be read as a stress test, not as the primary structural estimator. Stability of the signed risk-seeking component is informative, but near-boundary estimates of the positive risk-aversion component indicate that raw-level density ratios remain sensitive to tail singularities precisely in the deep-tail regions where option prices are most informative.

\begin{table}[!htbp]
\centering
\caption{Option-panel calibration sample.}
\label{tab:EBMM_theta_data_summary}
\begin{tabular}{lc}
\hline
Statistic & Value \\
\hline
Symbols & AAPL, AMZN, IWM, MSFT, NVDA, QQQ, SPY, TSLA \\
Trade dates & 20250102, 20250203, 20250303 \\
Expirations & 20250117, 20250221, 20250321 \\
Rows & 43642 \\
Moneyness range & [-0.350, 0.350] \\
Median implied volatility & 0.3392 \\
Implied-volatility range & [0.1066, 1.0355] \\
Median relative bid--ask spread & 0.0139 \\
Median underlying/forward & 384.2500 \\
\hline
\end{tabular}
\end{table}

\begin{table}[!htbp]
\centering
\caption{Bid--ask liquidity screen and spread-cutoff robustness.}
\label{tab:EBMM_liquidity_filter}
\begin{tabular}{lrrrrr}
\hline
Relative-spread cutoff & Rows & Median spread & EBMM & Arb.-EBMM & Arb.-SVI \\
\hline
25\% & 42359 & 0.0134 & 0.0898 & 0.0255 & 0.0298 \\
50\% & 43642 & 0.0139 & 0.0897 & 0.0265 & 0.0367 \\
100\% & 44854 & 0.0143 & 0.0916 & 0.0272 & 0.0465 \\
\hline
\end{tabular}
\begin{minipage}{0.88\textwidth}
\footnotesize Notes: The baseline screen keeps quotes with $bid>0$, $ask>bid$, $7\leq\tau_{days}\leq365$, and $(ask-bid)/mid\leq0.50$, where $mid=(ask+bid)/2$. The estimating equations use midpoints and liquidity weights $\omega_i=1/(1+((ask_i-bid_i)/mid_i)^2)$. The same spread weights are applied to the EBMM, quadratic, raw SVI, arbitrage-penalized EBMM, and arbitrage-penalized SVI objectives. The 25\% and 100\% rows are robustness checks around the 50\% baseline. RMSEs are holdout implied-volatility errors from the same fixed stratified split rule.
\end{minipage}
\end{table}

\begin{table}[!htbp]
\centering
\caption{First-stage implied-volatility shape fit.}
\label{tab:EBMM_theta_estimates}
\begin{tabular}{lc}
\hline
Parameter & Estimate \\
\hline
$\widehat\beta_0$ & 0.2856 \\
$\widehat\beta_+$ & 7.3340 \\
$\widehat\beta_-$ & -0.0666 \\
$R^2$ & 0.4184 \\
$N$ & 43642 \\
Median underlying/forward & 384.2500 \\
\hline
\end{tabular}
\begin{minipage}{0.88\textwidth}
\footnotesize Notes: The fitted panel uses the configured option panel across symbols, dates, and expirations. The dependent variable is implied volatility when available; otherwise the script uses an EOD midpoint proxy. The first-stage shape basis is $\beta_0+\beta_+[\cosh(m)-1]-\beta_-\sinh(m)$ with $m=\log(K/S)$. These beta coefficients describe the first-stage implied-volatility shape only and are not structural Arrow--Pratt parameters. Although $\widehat\beta_+$ is numerically large, $[\cosh(m)-1]$ is small over the retained moneyness range $[-0.350,0.350]$, so the fitted contribution remains inside the observed implied-volatility range. Structural Arrow--Pratt factors are estimated separately from the recovered density ratio.
\end{minipage}
\end{table}

\begin{table}[!htbp]
\centering
\caption{EBMM, quadratic, and SVI smile benchmarks.}
\label{tab:EBMM_model_comparison}
\begin{tabular}{lcccc}
\hline
Model & Train $R^2$ & Train RMSE & Test RMSE & Parameters \\
\hline
EBMM hyperbolic & 0.7747 & 0.0912 & 0.0897 & 3 \\
Arbitrage-penalized EBMM & 0.9815 & 0.0260 & 0.0265 & 3 \\
Quadratic smile & 0.7749 & 0.0912 & 0.0897 & 3 \\
Unconstrained raw SVI & -0.2258 & 0.2122 & 0.2101 & 5 \\
Arbitrage-penalized SVI & 0.9631 & 0.0368 & 0.0367 & 5 \\
\hline
\end{tabular}
\begin{minipage}{0.88\textwidth}
\footnotesize Notes: The test set is a stratified fixed random holdout from the multi-symbol, multi-date, multi-maturity option panel. EBMM and quadratic pooled regressions include symbol fixed effects when more than one ticker is configured. Arbitrage-penalized EBMM and arbitrage-penalized SVI are refit slice-by-slice using the same total-variance positivity and butterfly-density penalties. RMSEs are computed out of sample. The unconstrained SVI row is a raw parametric benchmark; neither penalized row is a full production static-arbitrage surface.
\end{minipage}
\end{table}

\begin{table}[!htbp]
\centering
\caption{Summary of recovered Arrow--Pratt risk-attitude factors.}
\label{tab:EBMM_arrow_pratt_risk_attitudes_summary}
\begin{tabular}{lccc}
\hline
Statistic & $\widehat A_{\mathrm{averse}}>0$ & $\widehat A_{\mathrm{seek}}<0$ & Density RMSE \\
\hline
Pooled fit & 0.0002 & -4.0558 & 1.8046 \\
Mean across slices & 7.3912 & -4.3729 & 1.8046 \\
Median across slices & 0.0000 & -2.4308 & 1.8046 \\
25th percentile & 0.0000 & -6.9648 & 1.8046 \\
75th percentile & 0.0000 & -0.8186 & 1.8046 \\
Share printed as 0.0000 & 75.0\% & -- & -- \\
Share with $\widehat A_{\mathrm{seek}}<0$ & -- & 100.0\% & -- \\
\hline
\end{tabular}
\begin{minipage}{0.88\textwidth}
\footnotesize Notes: The table summarizes the second-stage constrained minimum-distance estimates reported slice-by-slice in the appendix. The fitted log kernel is proportional to $A_{\mathrm{averse}}[\cosh(m)-1]+A_{\mathrm{seek}}\sinh(m)$ and is normalized against the physical density, so the intercept is absorbed by density normalization. The theoretical restrictions are $A_{\mathrm{averse}}>0$ and $A_{\mathrm{seek}}<0$; the optimizer imposes the interior bounds $A_{\mathrm{averse}}\geq 10^{-8}$ and $A_{\mathrm{seek}}\leq -10^{-8}$. Estimates printed as 0.0000 are therefore boundary-adjacent positive estimates, not negative risk aversion. The dispersion of $\widehat A_{\mathrm{averse}}$ is highly skewed, so the median and percentile rows better describe the boundary behavior than the arithmetic mean.
\end{minipage}
\end{table}

\begin{table}[!htbp]
\centering
\caption{Finite-sample Novikov consistency-screen summary.}
\label{tab:EBMM_novikov_sample_diagnostic_summary}
\begin{tabular}{lccc}
\hline
Statistic & Mean Novikov term & Max Novikov term & Max $|\widehat\lambda|$ \\
\hline
Minimum & 1.0000 & 1.0000 & 0.0000 \\
Median & 1.0054 & 1.0581 & 1.0646 \\
Mean & 1.0151 & 1.1841 & 1.4744 \\
Maximum & 1.0859 & 2.5082 & 6.1089 \\
Passing slices & 48/48 & 48/48 & 48/48 \\
\hline
\end{tabular}
\begin{minipage}{0.90\textwidth}
\footnotesize Notes: The screen summarizes the slice-level values of $N_j^{-1}\sum_i\exp\{0.5\tau_j\widehat\lambda_{ij}^2\}$ with $\widehat\lambda_{ij}=\widehat A_\theta(m_{ij},\tau_j)\widehat\sigma_{ij}$. All slice-level values are finite, consistent with the maintained admissibility assumption. This is a finite-sample consistency screen, not a proof of the continuous-time Novikov condition. Its empirical value is asymmetric: explosive values would reject the EBMM interpretation for the fitted slice, while finite values show only that the discrete data are consistent with admissibility under the maintained approximation. The full slice-level table is reported in the appendix.
\end{minipage}
\end{table}

\begin{table}[!htbp]
\centering
\caption{Predictive regression for next-horizon realized variance.}
\label{tab:EBMM_predictive_realized_variance}
\begin{tabular}{lcccc}
\hline
Regressor & Estimate & Std. error & $t$-stat. & $p$-value \\
\hline
$\alpha$ & 0.24539 & 0.04367 & 5.619 & $<0.0001$ \\
$\widehat A^{\mathrm{averse}}_{t}$ & -0.00194 & 0.00138 & -1.409 & 0.1656 \\
$\widehat A^{\mathrm{seek}}_{t}$ & 0.01840 & 0.00665 & 2.767 & 0.0082 \\
\hline
\end{tabular}
\begin{minipage}{0.90\textwidth}
\footnotesize Notes: The dependent variable is annualized realized variance over the option-maturity-matched forward window, $RV_{t,t+\tau}=252n^{-1}\sum_{s=t+1}^{t+n} r_s^2$. The regression is $RV_{t,t+\tau}=\alpha+\beta_1\widehat A^{\mathrm{averse}}_{t}+\beta_2\widehat A^{\mathrm{seek}}_{t}+\epsilon_t$ on symbol/date/expiration slices with available post-date returns. $N=48$ and $R^2=0.153$. This is a deliberately simple predictive-content check, not a causal design.
\end{minipage}
\end{table}

\tab The benchmark evidence should be read conservatively and literally. In the pooled holdout exercise, raw EBMM and the quadratic smile have nearly identical test RMSE, while the unconstrained raw SVI fit is unstable and much worse.  \Cref{tab:EBMM_liquidity_filter} shows that the main penalized EBMM--SVI ranking survives stricter and looser relative-spread cutoffs of 25\% and 100\%, although the exact RMSEs move with the retained tail quotes. We report an arbitrage-penalized EBMM fit: it keeps the same hyperbolic EBMM shape but refits each synchronized slice with the same total-variance positivity and butterfly-density penalties used for the arbitrage-penalized SVI benchmark.\footnote{$\frac{\partial^2 C(K,T)}{\partial K^2} \ge 0$ is the no-butterfly-arbitrage condition. Both arbitrage-penalized specifications penalize parameter choices that imply negative total variance or negative risk-neutral density, equivalently violations of convexity of option prices across strikes. For SVI, the objective is similar to
	$\min_{\theta_j} \sum_{i \in T_j} \ell(\sigma_i^{\text{obs}}-\sigma_{\text{SVI}}(k_i;\theta_j))+\lambda_1P_{\text{pos}}(\theta_j)+\lambda_2P_{\text{bf}}(\theta_j)$.
	The arbitrage-penalized EBMM fit replaces $\sigma_{\text{SVI}}(k_i;\theta_j)$ with the hyperbolic EBMM curve and applies the same penalty terms to the implied total-variance grid.} This is the appropriate comparison because the no-arbitrage restriction belongs to the option-price surface generated by the fitted curve, not to the SVI family alone.  \Cref{tab:EBMM_model_comparison} shows that this correction matters. Raw EBMM remains close to the quadratic benchmark. In this diagnostic implementation, arbitrage-penalized EBMM has lower pooled holdout RMSE than both raw EBMM and arbitrage-penalized SVI, although the slice-level table remains more mixed and continues to favor arbitrage-penalized SVI in many individual symbol/right slices. The empirical claim is therefore sharper: EBMM is not a substitute for production volatility-surface construction, but when the same static-arbitrage penalty is imposed on the EBMM curve, the pooled holdout benchmark no longer supports the objection that EBMM only looks good because SVI was treated unfairly. Its continuing role is to provide an interpretable signed Arrow--Pratt decomposition and martingale-admissibility screen for a candidate density ratio. A full arbitrage-free SVI surface in the spirit of \citet{GatheralJacquier2014}, SABR, or a comparable static-arbitrage-controlled estimator remains the appropriate production benchmark. Using the  \Cref{tab:EBMM_model_comparison} holdout errors, we also compute paired two-one-sided equivalence tests on holdout squared errors, together with one-sided non-inferiority tests. With an equivalence band equal to 5\% of the relevant benchmark holdout MSE, raw EBMM and the quadratic benchmark are statistically equivalent. Against unconstrained raw SVI, raw EBMM has materially lower error. Against arbitrage-penalized SVI, raw EBMM fails non-inferiority, but arbitrage-penalized EBMM has materially lower pooled holdout error.

\tab The paired tests are computed on holdout squared-error differences $d_i=e_{i,candidate}^2-e_{i,bench}^2$, where the candidate model is either raw EBMM or arbitrage-penalized EBMM. The equivalence band is $[-\Delta,\Delta]$, with $\Delta=0.05\,MSE_{bench}$. Thus the TOST null is $H_0:\mu_d\leq-\Delta$ or $\mu_d\geq\Delta$, against $H_1:-\Delta<\mu_d<\Delta$. Raw EBMM and the quadratic benchmark are statistically equivalent. Raw EBMM fails non-inferiority against arbitrage-penalized SVI. However, when EBMM receives the same arbitrage penalty, the mean paired MSE difference relative to arbitrage-penalized SVI is $-0.000645$, with a 90 percent confidence interval $[-0.000705,-0.000585]$ and non-inferiority $p<0.0001$. Thus arbitrage-penalized EBMM has materially lower pooled holdout squared-error loss. The pooled and slice-level conclusions differ because the pooled loss weights observations rather than slices equally. The appendix slice diagnostics show that SVI wins more individual symbol/right/maturity slices, whereas arbitrage-penalized EBMM reduces squared-error loss more in the pooled observation-level comparison. The result should therefore be interpreted as pooled-loss evidence, not as uniform slice-level dominance. Thus the present empirical claim is not that EBMM uniformly beats standard smile parametrizations. Rather, the panel exercise shows that the EBMM decomposition can be estimated after bid--ask filtering, subjected to the same static-arbitrage penalty used for SVI, benchmarked honestly, and screened for admissibility on real option data.

\tab Two identification cautions are important. First, many estimates of $\widehat A_{\mathrm{averse}}$ lie at the numerical lower bound. These estimates should be read as boundary estimates: the data in these slices do not separately identify a positive symmetric curvature component once the signed odd component is included. The empirical evidence in this sample is therefore primarily about the signed smirk or risk-seeking component, not about robust estimation of symmetric Arrow--Pratt risk aversion. Accordingly, the current data should not be read as strong empirical support for a separately identified symmetric risk-aversion factor; that component is largely unidentified in this small panel. This should be read as an empirical finding about the panel, not as evidence that the theoretical positive-curvature restriction is irrelevant. Second, calls and puts are fitted separately to allow for right-specific liquidity, quote, exercise, or microstructure effects. The common-kernel restriction behind put--call parity is a joint hypothesis about equality of the implied $A_t$ across rights; it is diagnosed here but not imposed in the separate call-only and put-only fits.

\tab The risk-seeking result should also be interpreted locally. In options markets, risk seeking does not mean that investors globally prefer risk in every state. In the EBMM/Arrow--Pratt representation, it means that the recovered pricing kernel contains a region of locally negative curvature, $A(m)<0$, so marginal utility rises over that part of the state or moneyness space. Economically, this corresponds to states in which the market assigns extra marginal value to positively convex upside payoffs. Such a pattern is consistent with speculative demand, lottery preference, leverage constraints, or clientele effects, and may appear as rich short-dated out-of-the-money calls, elevated right-wing implied volatility, expensive call risk reversals, or poor subsequent physical expected returns on lottery-like calls. In density-ratio terms, the signature is that $\widehat q_\tau/\widehat p_\tau$ rises over favorable states, so upside states are priced more richly than their physical probabilities alone would justify. This interpretation is compatible with simultaneous risk aversion in downside states: the same option surface can reflect demand for crash insurance in the left tail and speculative demand for jackpot payoffs in the right tail.

\begin{figure}[!htbp]
\centering
\includegraphics[width=0.72\textwidth]{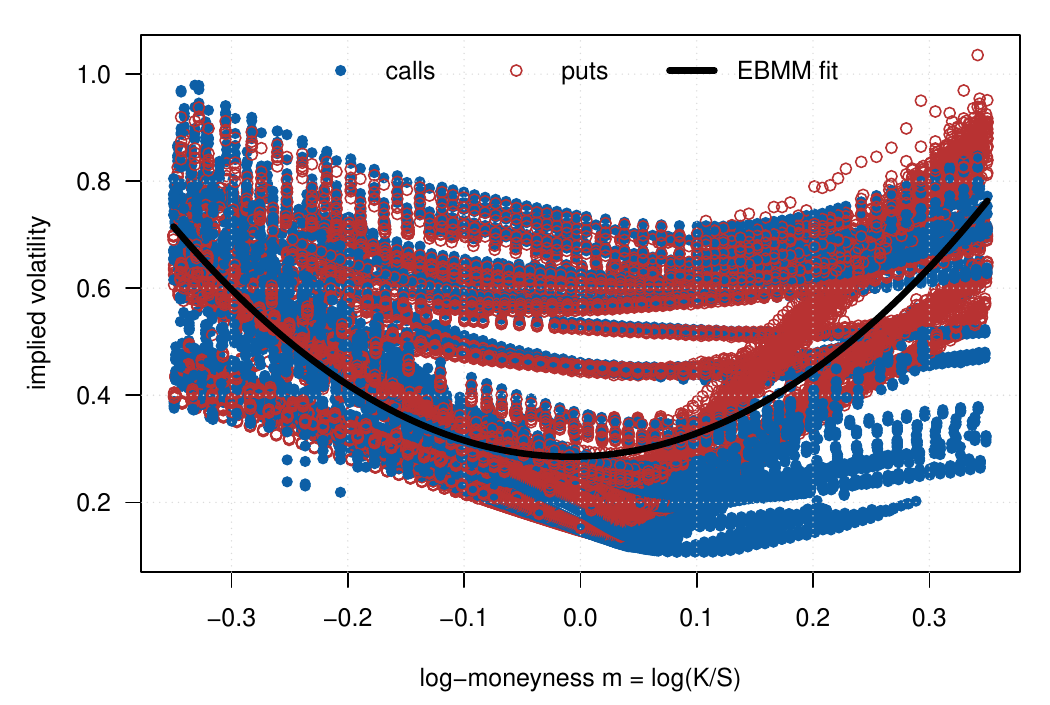}
\caption{ThetaData option-panel implied-volatility cross-section and fitted EBMM hyperbolic smile/smirk basis. Calls and puts are plotted against log-moneyness, with the fitted EBMM curve in black.}
\label{fig:EBMM_market_smile_fit}
\end{figure}

\begin{figure}[!htbp]
\centering
\includegraphics[width=0.72\textwidth]{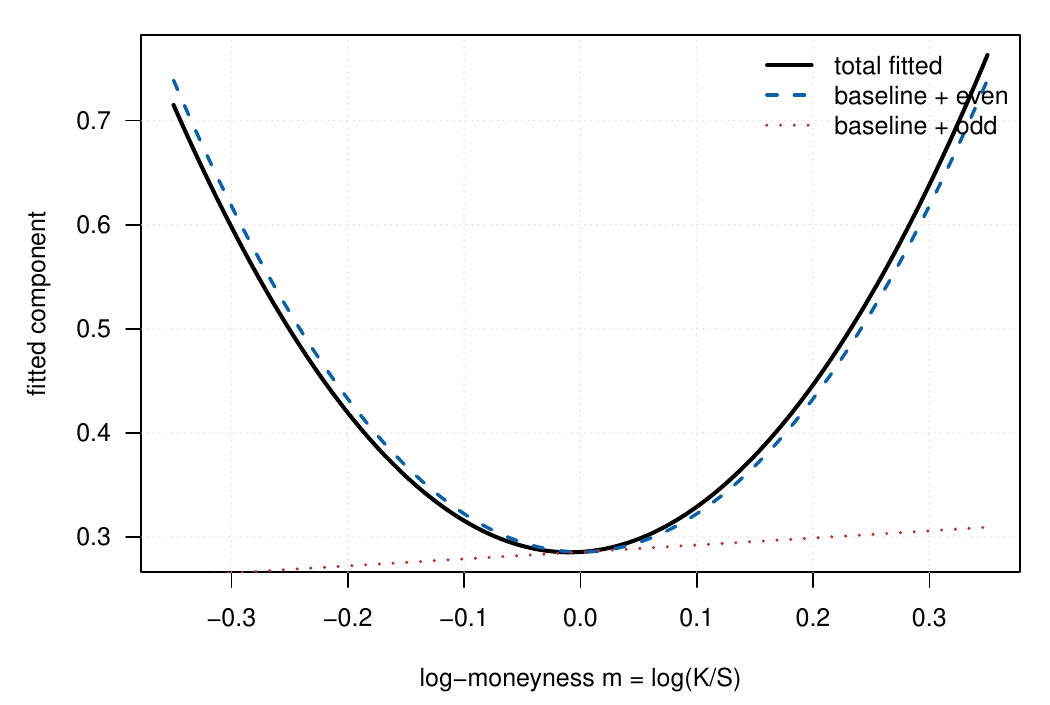}
\caption{Decomposition of the first-stage ThetaData panel EBMM implied-volatility curve into baseline, even smile, and odd smirk components. The structural $\widehat A_{\mathrm{averse}}$ and $\widehat A_{\mathrm{seek}}$ factors are estimated separately from the recovered density-ratio kernel.}
\label{fig:EBMM_market_decomposition}
\end{figure}

\tab The finite-sample Novikov screen reports whether the fitted $\widehat\lambda=\widehat A\widehat\sigma$ produces explosive exponential-square terms on each symbol/maturity slice, and the tail-instability screen flags slices where the estimated physical density is too small for raw density ratios to be trusted. All reported slice-level Novikov terms are finite when the screen column is marked Yes, consistent with the maintained admissibility assumption; this is a finite-sample consistency screen, not a proof of the continuous-time Novikov condition. The empirical value of the screen is asymmetric: explosive values would reject the EBMM interpretation for the fitted slice, while finite values only show that the discrete data are consistent with admissibility under the maintained approximation. The predictive regression in  \Cref{tab:EBMM_predictive_realized_variance} asks whether the recovered signed Arrow--Pratt factors forecast next-horizon realized variance. In this small panel, the signed risk-seeking component is positive and statistically significant, while the risk-aversion component is not statistically distinguishable from zero. Because $\widehat A_{\mathrm{seek}}<0$ by construction, the positive coefficient is a signed effect: a less negative risk-seeking component predicts higher realized variance. Equivalently, the risk-seeking intensity $-\widehat A_{\mathrm{seek}}>0$ has the opposite sign in this sample. The test has only 48 synchronized symbol/date/expiration slices and does not include market-wide controls such as VIX, term-spread variables, realized skew, or liquidity measures, so it should be read as a small-sample predictive check rather than a full forecasting study. 

\subsection{Illustrative smile and smirk calibration}\label{subsec:illustrative_smile_smirk_calibration}
A simple numerical illustration shows what the representation is designed to do. Normalize $\sigma_S(m,\tau)=0.20$ and $r_f=0$. If $A_{\mathrm{averse}}(\tau)=0.20$ and $A_{\mathrm{seek}}(\tau)=0$, then \eqref{eq:tau_weld} is even in $m$ and produces larger risk premia for deep in-the-money and out-of-the-money states than for at-the-money states. If $A_{\mathrm{averse}}(\tau)=0.10$ and $A_{\mathrm{seek}}(\tau)=-0.18$, the odd component tilts the surface so that one side of the moneyness distribution carries a larger behavioral premium. The preceding subsection gives the market-calibration counterpart. The present numerical example is intentionally schematic: it demonstrates that the EBMM decomposition can map symmetric curvature and asymmetric skew into risk-premium surfaces, but it does not use observed option prices.

\begin{figure}[!htbp]
\centering
\includegraphics[width=0.70\textwidth]{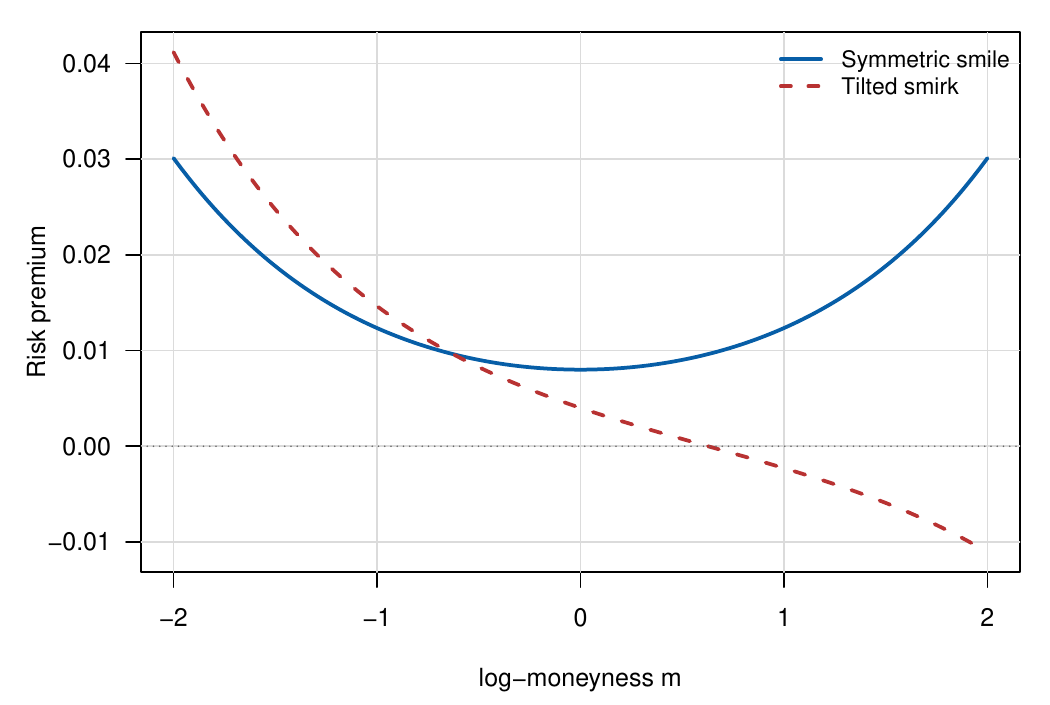}
\caption{Illustrative EBMM smile and smirk risk-premium shapes. The symmetric curve sets $A_{\mathrm{averse}}(\tau)=0.20$ and $A_{\mathrm{seek}}(\tau)=0$; the tilted curve sets $A_{\mathrm{averse}}(\tau)=0.10$ and $A_{\mathrm{seek}}(\tau)=-0.18$. The figure is schematic and is included to show the mechanism, not to claim empirical calibration.}
\label{fig:EBMM_smile_smirk}
\end{figure}

\subsection{Additional numerical diagnostics}\label{subsec:additional_numerical_diagnostics}
The EBMM construction has three moving parts: the Arrow--Pratt risk-attitude process, the stochastic-exponential density, and the aggregation of heterogeneous behavioral components. The following diagnostic plots are generated by the accompanying R script and are meant to clarify those mechanics. They are not calibrated to market data; they show how the theoretical restrictions behave under controlled parameter choices.

\Cref{fig:EBMM_arrow_pratt_paths} illustrates the recurrent Arrow--Pratt benchmark used in  \Cref{prop:APriskProcess}. The upper panel plots simulated Ornstein--Uhlenbeck risk-attitude states $R_t$, while the lower panel applies the positive transform $A_t=\exp(R_t)$. The figure makes the role of the transformation transparent for the locally risk-averse case: the recurrent benchmark can move above and below zero, while the transformed Arrow--Pratt coefficient used in the density remains positive. The signed empirical decomposition above separately allows negative local curvature as a risk-seeking component.

\begin{figure}[!htbp]
\centering
\includegraphics[width=0.72\textwidth]{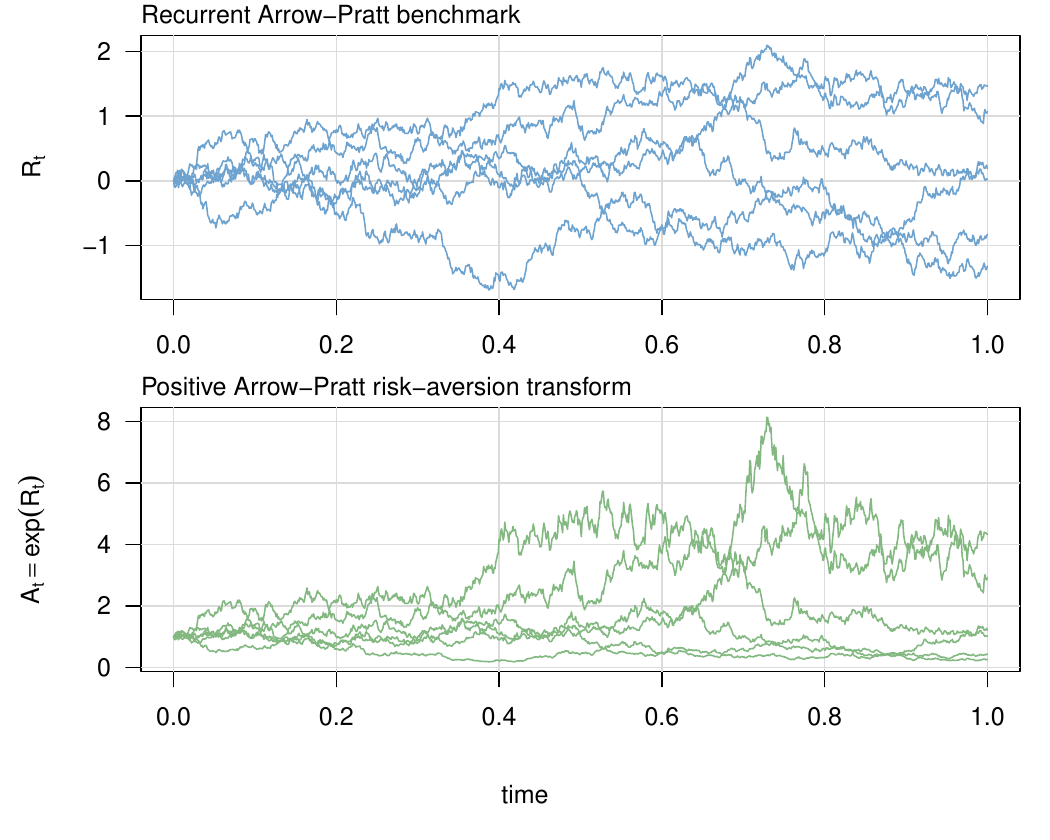}
\caption{Simulated recurrent Arrow--Pratt paths and positive risk-aversion transforms. The upper panel shows $R_t$; the lower panel shows $A_t=\exp(R_t)$.}
\label{fig:EBMM_arrow_pratt_paths}
\end{figure}

\Cref{fig:EBMM_density_paths} plots the corresponding behavioral martingale-density paths. Each path is generated from a stochastic exponential with kernel $-A_t\sigma_t$. The horizontal line at one marks the initial density. Dispersion around that line reflects state-price reweighting, not a change in total probability mass, because the density process is normalized as a martingale under the maintained admissibility condition.

\begin{figure}[!htbp]
\centering
\includegraphics[width=0.72\textwidth]{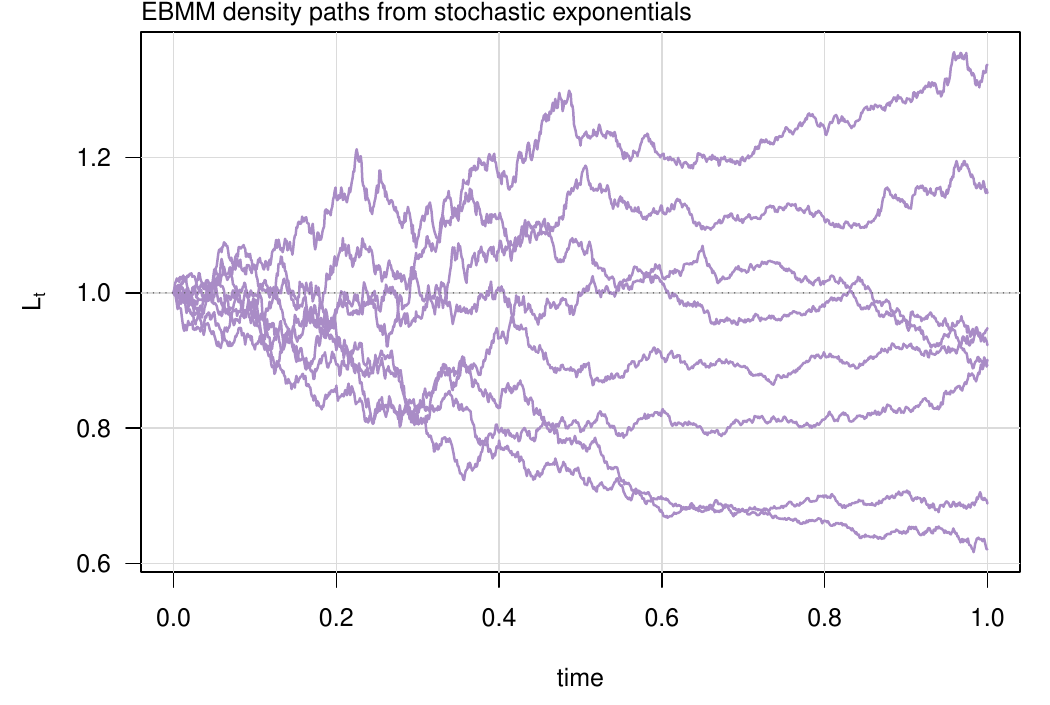}
\caption{Illustrative EBMM density paths generated by stochastic exponentials. The paths show how behavioral state-price weights vary over time while remaining normalized density processes.}
\label{fig:EBMM_density_paths}
\end{figure}

\Cref{fig:EBMM_weighted_mixture_density} illustrates the weighted EBMM construction in \Cref{thm:weightedEBMM}. The component densities correspond to low, medium, and high risk-attitude types. The black line is the normalized weighted mixture. The figure emphasizes why the weights must be economically interpretable and normalized: without normalization the mixture is not a probability density and therefore cannot define an equivalent martingale measure.

\begin{figure}[!htbp]
\centering
\includegraphics[width=0.72\textwidth]{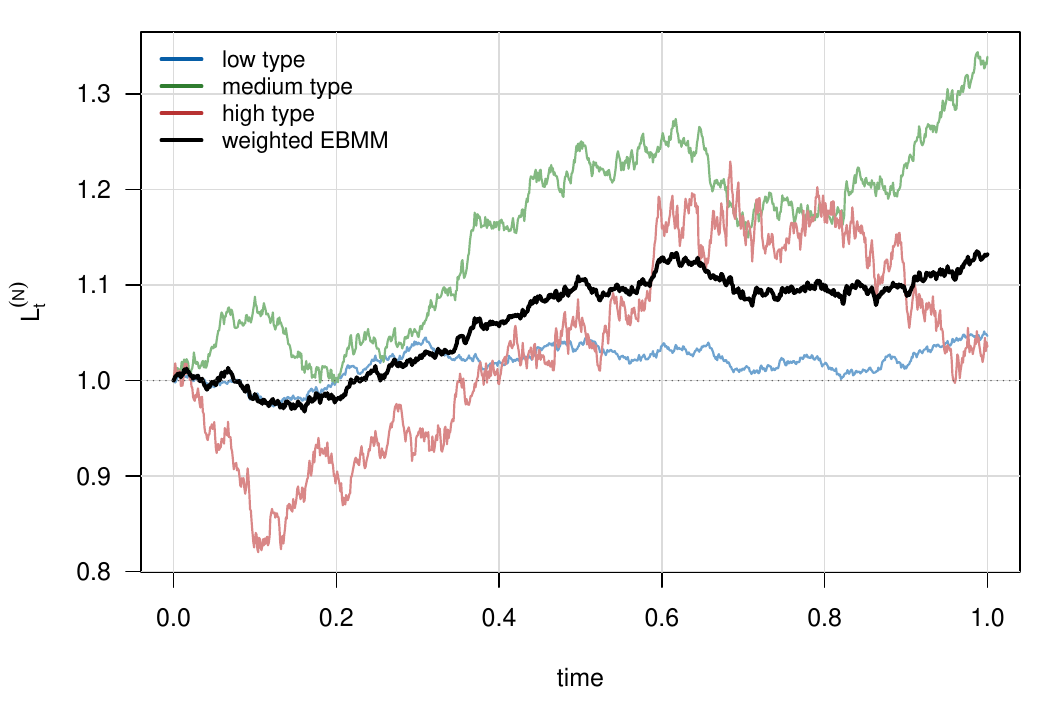}
\caption{Heterogeneous behavioral density components and the normalized weighted EBMM mixture. The component weights are interpreted as type shares or liquidity weights.}
\label{fig:EBMM_weighted_mixture_density}
\end{figure}

\tab Finally, \Cref{fig:EBMM_novikov_diagnostic} gives a simple diagnostic for the sufficient martingale condition. In the constant-kernel case $\theta=A_t\sigma_t$, Novikov's condition reduces to finiteness of an exponential square-integrability term. The plot shows how that term increases with the behavioral market-price-of-risk component. The condition is sufficient rather than necessary, but the diagnostic is useful because it identifies parameter regions where the stochastic-exponential density is safely admissible.

\begin{figure}[!htbp]
\centering
\includegraphics[width=0.70\textwidth]{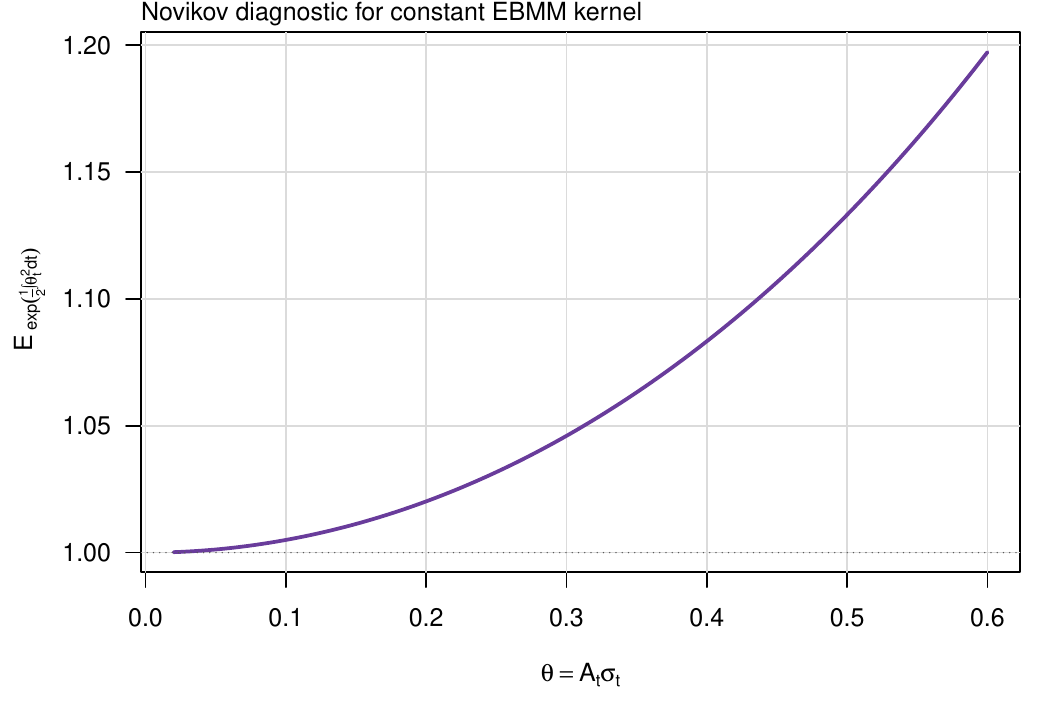}
\caption{Novikov diagnostic for a constant EBMM kernel $\theta=A_t\sigma_t$. The plotted term increases with behavioral risk sensitivity, illustrating why admissibility becomes more demanding as $A_t\sigma_t$ rises.}
\label{fig:EBMM_novikov_diagnostic}
\end{figure}

\subsection{Risk-management implications}\label{subsec:risk_management_implications}
The model yields three risk-management implications. First, when the product $A_t\sigma_t$ rises, the state-price density becomes more sensitive to adverse states, so hedging demand should increase even when physical drift estimates are unchanged. Second, option smiles and smirks can be interpreted as state-dependent risk-attitude surfaces rather than as volatility artifacts alone, provided the recovered density satisfies the EBMM admissibility restrictions. Third, heterogeneous-agent mixtures are economically meaningful only after raw pricing-kernel weights are normalized, as in \eqref{eq:normalizedWeights}; otherwise the mixture is not a probability density. These implications are testable with option panels by recovering state-price densities, estimating the physical density from realized returns, calibrating \eqref{eq:EBMM_calibration_objective}, and checking whether the implied marginal-utility density satisfies positivity, normalization, and stochastic-exponential admissibility.
\FloatBarrier
\section{\sc{Conclusion}}\label{sec:Conclusion}
This paper revises the Samuelson--Merton ``util-prob'' construction as a compensated marginal-utility density. The central point is that marginal utility alone does not define an equivalent martingale measure: positivity, integrability, normalization, and a stochastic-exponential martingale condition are required. The first main result gives sufficient conditions under which the compensated density is equivalent to the physical measure. The second shows how the Arrow--Pratt and prudence terms enter the compensator. The weighted EBMM result then shows how heterogeneous behavioral densities can be combined, provided the weights have an economic interpretation and are normalized.

\tab The risk-management section adds two implications. First, in a Cox--Ross setting the common state-price kernel cancels from the ratio of option and underlying risk premia; the EBMM interpretation identifies that kernel with risk attitude times volatility. Second, smiles and smirks can be represented by decomposing the Arrow--Pratt process into symmetric and asymmetric state-dependent risk-attitude components across moneyness. The hyperbolic specification is deliberately parsimonious and should be viewed as a tractable parametrization, not as the only possible smile model. The calibration protocol in \Cref{subsec:market_calibration} shows how the theory can be taken to data: recover the risk-neutral and physical densities, fit the EBMM density ratio, and test whether the fitted kernel satisfies the martingale admissibility restrictions. The present panel evidence suggests that the signed risk-seeking or smirk component carries more empirical variation than the symmetric risk-aversion component, and that it has preliminary small-sample predictive content for next-horizon realized variance. The empirical purpose of EBMM is therefore not to replace SVI, SABR, or other production volatility-surface models, but to ask whether the density ratio recovered from such models admits a compensated marginal-utility representation with interpretable signed Arrow--Pratt components. A full out-of-sample forecasting study with more dates, richer controls, liquidity filters, and production-quality arbitrage-free density recovery is a natural next step. Future work should implement this exercise on broader index, commodity, crypto, or single-name option panels, compare the EBMM decomposition with standard implied-volatility parametrizations, and test whether recovered marginal-utility densities satisfy the admissibility restrictions derived here.

\appendix
\inappendixtrue
\singlespacing
\section{Convex-Analytic Note on Mixtures of Pricing Measures}\label{app:topology}

\singlespacing
The mixture construction in \eqref{EBMMwgt1} can be read as a finite-dimensional convex representation: admissible pricing densities are combined with nonnegative weights and then normalized to preserve the martingale-density property. In compact convex settings, Krein--Milman and Choquet-type results provide a broader mathematical intuition for representing points of a convex set by mixtures over extreme points. That intuition is useful for thinking about heterogeneous-agent or regime mixtures of pricing kernels, but it is not needed for the paper's asset-pricing results. The operative object throughout the main text is the standard Radon--Nikodym density process: a positive, normalized martingale generated by a stochastic exponential. Thus the EBMM mixture is admissible because the component densities and the normalized weighted sum satisfy the stochastic-exponential and martingale restrictions, not because of a separate topological representation theorem.

\section{Additional Empirical Diagnostics}\label{app:empirical_diagnostics}

\tab This appendix reports diagnostic tables that support the main empirical exercise but are not needed for the core argument in \Cref{subsec:thetadata_calibration}. The main text emphasizes the sample summary, pooled benchmark comparison, structural Arrow--Pratt factor summary, finite-sample Novikov screen summary, and predictive regression. The full slice-level tables are retained here for replication transparency.

\begin{table}[!htbp]
\centering
\scriptsize
\caption{Out-of-sample benchmark errors by symbol and option type.}
\label{tab:EBMM_model_comparison_by_slice}
\resizebox{\textwidth}{!}{%
\begin{tabular}{lccccccc}
\hline
Group & Slices & Test $N$ & EBMM RMSE & Arb.-EBMM RMSE & Quadratic RMSE & Arb.-SVI RMSE & Wins \\
\hline
AAPL CALL & 6 & 348 & 0.0159 & 0.0159 & 0.0156 & 0.0112 & E*/E/Q/S=1/0/0/5 \\
AAPL PUT & 6 & 417 & 0.0299 & 0.0299 & 0.0298 & 0.0107 & E*/E/Q/S=0/0/0/6 \\
AMZN CALL & 6 & 369 & 0.0084 & 0.0084 & 0.0084 & 0.0094 & E*/E/Q/S=3/0/1/2 \\
AMZN PUT & 6 & 398 & 0.0263 & 0.0263 & 0.0264 & 0.0115 & E*/E/Q/S=0/0/0/6 \\
IWM CALL & 6 & 1022 & 0.0142 & 0.0142 & 0.0140 & 0.0074 & E*/E/Q/S=0/0/0/6 \\
IWM PUT & 6 & 1158 & 0.0294 & 0.0294 & 0.0293 & 0.0113 & E*/E/Q/S=0/0/0/6 \\
MSFT CALL & 6 & 553 & 0.0164 & 0.0164 & 0.0162 & 0.0119 & E*/E/Q/S=0/0/1/5 \\
MSFT PUT & 6 & 666 & 0.0298 & 0.0298 & 0.0297 & 0.0089 & E*/E/Q/S=0/0/0/6 \\
NVDA CALL & 6 & 883 & 0.0101 & 0.0101 & 0.0101 & 0.0116 & E*/E/Q/S=1/4/1/0 \\
NVDA PUT & 6 & 888 & 0.0200 & 0.0200 & 0.0201 & 0.0142 & E*/E/Q/S=0/0/0/6 \\
QQQ CALL & 6 & 919 & 0.0146 & 0.0146 & 0.0144 & 0.0117 & E*/E/Q/S=0/0/2/4 \\
QQQ PUT & 6 & 1119 & 0.0374 & 0.0374 & 0.0373 & 0.0112 & E*/E/Q/S=0/0/0/6 \\
SPY CALL & 6 & 1283 & 0.0181 & 0.0196 & 0.0179 & 0.0091 & E*/E/Q/S=0/0/0/6 \\
SPY PUT & 6 & 1550 & 0.0393 & 0.0393 & 0.0391 & 0.0117 & E*/E/Q/S=0/0/0/6 \\
TSLA CALL & 6 & 738 & 0.0104 & 0.0104 & 0.0104 & 0.0094 & E*/E/Q/S=0/0/0/6 \\
TSLA PUT & 6 & 738 & 0.0123 & 0.0123 & 0.0123 & 0.0121 & E*/E/Q/S=1/4/0/1 \\
\hline
\end{tabular}
}
\begin{minipage}{0.94\textwidth}
\footnotesize Notes: Each row refits the benchmark specifications on a fixed 70\% training split within each synchronized trade-date/maturity/right slice and reports averages across eligible slices for the indicated symbol/right group. The Wins column reports arbitrage-penalized EBMM/raw EBMM/quadratic/SVI slice wins. Overall win counts: Arb.-EBMM=6, EBMM=8, Quadratic=5, SVI=77. The arbitrage-penalized EBMM and SVI columns both use positivity and butterfly-density penalties on fitted total variance. The full slice-level table is saved in the replication CSV. These are forecasting-error diagnostics; a production comparison should still impose full static-arbitrage surface constraints.
\end{minipage}
\end{table}

\begin{table}[!htbp]
\centering
\caption{Call and put EBMM decompositions.}
\label{tab:EBMM_theta_by_right}
\begin{tabular}{lcc}
\hline
Parameter & Calls & Puts \\
\hline
$\widehat\beta_0$ & 0.2785 & 0.2938 \\
$\widehat\beta_+$ & 6.6937 & 7.4325 \\
$\widehat\beta_-$ & 0.1009 & -0.1754 \\
$R^2$ & 0.3410 & 0.5104 \\
$N$ & 20461 & 23181 \\
\hline
\end{tabular}
\end{table}

\begin{table}[!htbp]
\centering
\caption{Second-stage recovered Arrow--Pratt risk-attitude factors.}
\label{tab:EBMM_arrow_pratt_risk_attitudes}
\begin{tabular}{lcccc}
\hline
Sample & $\widehat A_{\mathrm{averse}}>0$ & $\widehat A_{\mathrm{seek}}<0$ & Density RMSE & $N$ \\
\hline
Pooled & 0.0002 & -4.0558 & 1.8046 & 12288 \\
AAPL/20250102/20250117 & 0.0000 & -15.7355 & 2.8791 & 256 \\
AAPL/20250102/20250221 & 0.0000 & -6.7925 & 1.9393 & 256 \\
AAPL/20250102/20250321 & 0.0000 & -10.4251 & 3.7955 & 256 \\
AAPL/20250203/20250221 & 0.0000 & -4.0168 & 0.7920 & 256 \\
AAPL/20250203/20250321 & 0.0000 & -4.4743 & 1.2006 & 256 \\
AAPL/20250303/20250321 & 0.0000 & -7.6608 & 1.5102 & 256 \\
AMZN/20250102/20250117 & 0.0000 & -10.6274 & 2.9111 & 256 \\
AMZN/20250102/20250221 & 0.0000 & -2.9319 & 1.5687 & 256 \\
AMZN/20250102/20250321 & 0.0000 & -1.1428 & 0.5160 & 256 \\
AMZN/20250203/20250221 & 14.6658 & -0.0000 & 0.3904 & 256 \\
AMZN/20250203/20250321 & 0.0000 & -1.5602 & 0.4317 & 256 \\
AMZN/20250303/20250321 & 0.0000 & -0.9619 & 0.4675 & 256 \\
IWM/20250102/20250117 & 0.0001 & -1.8410 & 0.1744 & 256 \\
IWM/20250102/20250221 & 47.9675 & -0.8267 & 0.2249 & 256 \\
IWM/20250102/20250321 & 78.2130 & -0.5566 & 0.3205 & 256 \\
IWM/20250203/20250221 & 0.0000 & -1.3945 & 0.2259 & 256 \\
IWM/20250203/20250321 & 5.5955 & -0.7943 & 0.3482 & 256 \\
IWM/20250303/20250321 & 115.1280 & -0.1252 & 0.3491 & 256 \\
MSFT/20250102/20250117 & 0.0000 & -16.6267 & 4.8147 & 256 \\
MSFT/20250102/20250221 & 0.0000 & -0.0000 & 0.5426 & 256 \\
MSFT/20250102/20250321 & 0.0000 & -9.9942 & 4.2902 & 256 \\
MSFT/20250203/20250221 & 0.0000 & -13.3685 & 3.9807 & 256 \\
MSFT/20250203/20250321 & 0.0000 & -0.0000 & 0.6809 & 256 \\
MSFT/20250303/20250321 & 0.0000 & -7.4815 & 2.0271 & 256 \\
NVDA/20250102/20250117 & 0.0000 & -2.3505 & 0.1955 & 256 \\
NVDA/20250102/20250221 & 0.0000 & -2.5110 & 0.1878 & 256 \\
NVDA/20250102/20250321 & 0.0000 & -2.1687 & 0.0852 & 256 \\
NVDA/20250203/20250221 & 24.4270 & -2.2114 & 0.0623 & 256 \\
NVDA/20250203/20250321 & 14.7557 & -2.1463 & 0.1604 & 256 \\
NVDA/20250303/20250321 & 35.4406 & -1.7055 & 0.0961 & 256 \\
QQQ/20250102/20250117 & 0.0000 & -9.7010 & 1.3545 & 256 \\
QQQ/20250102/20250221 & 0.0000 & -1.5229 & 0.7986 & 256 \\
QQQ/20250102/20250321 & 0.0000 & -7.8636 & 2.4815 & 256 \\
QQQ/20250203/20250221 & 0.0000 & -2.6659 & 0.4723 & 256 \\
QQQ/20250203/20250321 & 0.0000 & -13.4124 & 3.4444 & 256 \\
QQQ/20250303/20250321 & 0.0000 & -5.1526 & 0.8033 & 256 \\
SPY/20250102/20250117 & 0.0000 & -5.7608 & 0.3765 & 256 \\
SPY/20250102/20250221 & 0.0000 & -5.9081 & 0.7874 & 256 \\
SPY/20250102/20250321 & 16.8340 & -3.8705 & 0.3690 & 256 \\
SPY/20250203/20250221 & 0.0000 & -6.4564 & 0.5487 & 256 \\
SPY/20250203/20250321 & 0.0000 & -8.4089 & 1.2595 & 256 \\
SPY/20250303/20250321 & 0.0001 & -3.0957 & 0.4538 & 256 \\
TSLA/20250102/20250117 & 0.0000 & -0.6189 & 0.2190 & 256 \\
TSLA/20250102/20250221 & 0.0000 & -0.5909 & 0.0743 & 256 \\
TSLA/20250102/20250321 & 0.0000 & -0.7799 & 0.0667 & 256 \\
TSLA/20250203/20250221 & 0.0000 & -0.6165 & 0.1971 & 256 \\
TSLA/20250203/20250321 & 0.0000 & -0.6306 & 0.1981 & 256 \\
TSLA/20250303/20250321 & 1.7497 & -0.4110 & 0.1553 & 256 \\
\hline
\end{tabular}
\begin{minipage}{0.88\textwidth}
\footnotesize Notes: These estimates are obtained in a second-stage constrained minimum-distance fit of the normalized EBMM density kernel to the diagnostic $\log(q_\tau/p_\tau)$ ratio. The fitted log kernel is proportional to $A_{\mathrm{averse}}[\cosh(m)-1]+A_{\mathrm{seek}}\sinh(m)$ and is normalized against the physical density, so the intercept is absorbed by the density normalization. The theoretical restrictions are $A_{\mathrm{averse}}>0$ and $A_{\mathrm{seek}}<0$; $A_{\mathrm{averse}}$ is the risk-averse curvature component and $A_{\mathrm{seek}}$ is the signed risk-seeking/speculative component. The numerical optimizer imposes the interior bounds $A_{\mathrm{averse}}\geq\varepsilon$ and $A_{\mathrm{seek}}\leq-\varepsilon$ with $\varepsilon=10^{-8}$, so zero is excluded as the risk-neutral boundary.
\end{minipage}
\end{table}

\begin{table}[!htbp]
\centering
\caption{Raw-ratio second-stage Arrow--Pratt robustness fit.}
\label{tab:EBMM_arrow_pratt_raw_ratio_risk_attitudes}
\begin{tabular}{lcccc}
\hline
Sample & $\widehat A_{\mathrm{averse}}>0$ & $\widehat A_{\mathrm{seek}}<0$ & Raw-ratio RMSE & $N$ \\
\hline
Pooled & 4.00e-07 & -0.9595 & 0.7194 & 12288 \\
AAPL/20250102/20250117 & 5.18e-07 & -0.9295 & 0.6968 & 256 \\
AAPL/20250102/20250221 & 3.24e-08 & -1.1875 & 0.4315 & 256 \\
AAPL/20250102/20250321 & 7.89e-05 & -1.2846 & 0.5358 & 256 \\
AAPL/20250203/20250221 & 1.14e-06 & -1.2330 & 0.3458 & 256 \\
AAPL/20250203/20250321 & 1.76e-07 & -1.2245 & 0.3762 & 256 \\
AAPL/20250303/20250321 & 5.72e-08 & -1.1531 & 0.4262 & 256 \\
AMZN/20250102/20250117 & 1.00e-08 & -1.00e-08 & 0.5645 & 256 \\
AMZN/20250102/20250221 & 1.00e-08 & -1.00e-08 & 0.3342 & 256 \\
AMZN/20250102/20250321 & 7.85e-07 & -0.5212 & 0.4161 & 256 \\
AMZN/20250203/20250221 & 28.0859 & -3.21e-08 & 0.2888 & 256 \\
AMZN/20250203/20250321 & 5.84e-08 & -0.5902 & 0.3077 & 256 \\
AMZN/20250303/20250321 & 1.00e-08 & -1.00e-08 & 0.3397 & 256 \\
IWM/20250102/20250117 & 0.3630 & -1.5811 & 1.4565 & 256 \\
IWM/20250102/20250221 & 68.9838 & -9.89e-08 & 1.2083 & 256 \\
IWM/20250102/20250321 & 83.9593 & -1.54e-08 & 1.0489 & 256 \\
IWM/20250203/20250221 & 8.0404 & -0.9875 & 1.7805 & 256 \\
IWM/20250203/20250321 & 40.5779 & -1.0791 & 0.3194 & 256 \\
IWM/20250303/20250321 & 131.1926 & -0.7876 & 2.1790 & 256 \\
MSFT/20250102/20250117 & 1.00e-08 & -1.00e-08 & 0.6035 & 256 \\
MSFT/20250102/20250221 & 1.00e-08 & -1.00e-08 & 0.4677 & 256 \\
MSFT/20250102/20250321 & 9.08e-07 & -0.6729 & 0.5167 & 256 \\
MSFT/20250203/20250221 & 6.88e-06 & -0.3779 & 0.5241 & 256 \\
MSFT/20250203/20250321 & 1.00e-08 & -1.00e-08 & 0.5724 & 256 \\
MSFT/20250303/20250321 & 3.40e-07 & -0.1867 & 0.4539 & 256 \\
NVDA/20250102/20250117 & 1.19e-05 & -1.9434 & 0.2305 & 256 \\
NVDA/20250102/20250221 & 1.84e-07 & -1.8325 & 0.1385 & 256 \\
NVDA/20250102/20250321 & 0.7700 & -2.0461 & 0.0957 & 256 \\
NVDA/20250203/20250221 & 23.8908 & -2.3525 & 0.1403 & 256 \\
NVDA/20250203/20250321 & 12.1720 & -2.1381 & 0.2132 & 256 \\
NVDA/20250303/20250321 & 34.4335 & -2.0005 & 0.1942 & 256 \\
QQQ/20250102/20250117 & 8.40e-07 & -1.8509 & 0.6054 & 256 \\
QQQ/20250102/20250221 & 8.84e-07 & -0.4509 & 0.5328 & 256 \\
QQQ/20250102/20250321 & 2.64e-06 & -1.5602 & 0.3643 & 256 \\
QQQ/20250203/20250221 & 5.56e-07 & -1.2108 & 0.5037 & 256 \\
QQQ/20250203/20250321 & 8.79e-07 & -1.7213 & 0.4607 & 256 \\
QQQ/20250303/20250321 & 5.11e-07 & -1.9675 & 0.3153 & 256 \\
SPY/20250102/20250117 & 6.25e-06 & -2.9678 & 0.9125 & 256 \\
SPY/20250102/20250221 & 4.86e-07 & -2.6550 & 0.3474 & 256 \\
SPY/20250102/20250321 & 42.3000 & -3.05e-08 & 0.7329 & 256 \\
SPY/20250203/20250221 & 5.74e-04 & -3.1682 & 0.9478 & 256 \\
SPY/20250203/20250321 & 4.11e-05 & -2.6860 & 0.3591 & 256 \\
SPY/20250303/20250321 & 81.4091 & -2.3823 & 1.4268 & 256 \\
TSLA/20250102/20250117 & 3.00e-08 & -0.3347 & 0.1880 & 256 \\
TSLA/20250102/20250221 & 1.79e-08 & -0.5103 & 0.0725 & 256 \\
TSLA/20250102/20250321 & 1.54e-08 & -0.6796 & 0.0667 & 256 \\
TSLA/20250203/20250221 & 7.50e-08 & -0.4161 & 0.1768 & 256 \\
TSLA/20250203/20250321 & 4.60e-07 & -0.4127 & 0.1842 & 256 \\
TSLA/20250303/20250321 & 3.0106 & -0.3564 & 0.1602 & 256 \\
\hline
\end{tabular}
\begin{minipage}{0.88\textwidth}
\footnotesize Notes: This robustness check fits the raw diagnostic ratio $q_\tau/p_\tau$ directly, rather than its logarithm. Because the ratio can become numerically singular when the physical density is very small, the objective uses physical-density integration weights, a 99.5\% tail cap on the empirical ratio within each synchronized slice, and a Huber loss on level residuals. The raw-ratio specification is intentionally severe and is sensitive to tail singularities when $\widehat p_\tau$ is small. The fact that the sign restriction on $A_{\mathrm{seek}}$ survives is informative, but the near-boundary estimates of $A_{\mathrm{averse}}$ in several short-horizon slices show that raw-level density ratios should not be used as the primary structural estimator. Scientific notation flags estimates that are very close to, but still inside, the risk-neutral boundary.
\end{minipage}
\end{table}

\begin{table}[!htbp]
\centering
\caption{Finite-sample Novikov admissibility screen for the option-panel calibration.}
\label{tab:EBMM_novikov_sample_diagnostic}
\begin{tabular}{lccccc}
\hline
Slice & $\tau$ & Mean Novikov term & Max Novikov term & Max $|\widehat\lambda|$ & Screen \\
\hline
AAPL/20250102/20250117 & 0.041 & 1.0682 & 1.7352 & 5.1808 & Yes \\
AAPL/20250102/20250221 & 0.137 & 1.0239 & 1.1736 & 1.5294 & Yes \\
AAPL/20250102/20250321 & 0.214 & 1.0579 & 1.4858 & 1.9257 & Yes \\
AAPL/20250203/20250221 & 0.049 & 1.0032 & 1.0353 & 1.1871 & Yes \\
AAPL/20250203/20250321 & 0.126 & 1.0090 & 1.0635 & 0.9890 & Yes \\
AAPL/20250303/20250321 & 0.049 & 1.0175 & 1.1489 & 2.3734 & Yes \\
AMZN/20250102/20250117 & 0.041 & 1.0250 & 1.2569 & 3.3372 & Yes \\
AMZN/20250102/20250221 & 0.137 & 1.0047 & 1.0275 & 0.6295 & Yes \\
AMZN/20250102/20250321 & 0.214 & 1.0009 & 1.0052 & 0.2199 & Yes \\
AMZN/20250203/20250221 & 0.049 & 1.0012 & 1.0136 & 0.7402 & Yes \\
AMZN/20250203/20250321 & 0.126 & 1.0014 & 1.0078 & 0.3504 & Yes \\
AMZN/20250303/20250321 & 0.049 & 1.0003 & 1.0023 & 0.3083 & Yes \\
IWM/20250102/20250117 & 0.041 & 1.0006 & 1.0065 & 0.5617 & Yes \\
IWM/20250102/20250221 & 0.137 & 1.0150 & 1.1695 & 1.5124 & Yes \\
IWM/20250102/20250321 & 0.214 & 1.0245 & 1.7882 & 2.3330 & Yes \\
IWM/20250203/20250221 & 0.049 & 1.0004 & 1.0041 & 0.4080 & Yes \\
IWM/20250203/20250321 & 0.126 & 1.0002 & 1.0047 & 0.2720 & Yes \\
IWM/20250303/20250321 & 0.049 & 1.0344 & 2.5082 & 6.1089 & Yes \\
MSFT/20250102/20250117 & 0.041 & 1.0844 & 1.8477 & 5.4680 & Yes \\
MSFT/20250102/20250221 & 0.137 & 1.0000 & 1.0000 & 0.0000 & Yes \\
MSFT/20250102/20250321 & 0.214 & 1.0652 & 1.5040 & 1.9551 & Yes \\
MSFT/20250203/20250221 & 0.049 & 1.0418 & 1.4721 & 3.9616 & Yes \\
MSFT/20250203/20250321 & 0.126 & 1.0000 & 1.0000 & 0.0000 & Yes \\
MSFT/20250303/20250321 & 0.049 & 1.0161 & 1.1392 & 2.3000 & Yes \\
NVDA/20250102/20250117 & 0.041 & 1.0021 & 1.0119 & 0.7591 & Yes \\
NVDA/20250102/20250221 & 0.137 & 1.0050 & 1.0224 & 0.5693 & Yes \\
NVDA/20250102/20250321 & 0.214 & 1.0063 & 1.0233 & 0.4644 & Yes \\
NVDA/20250203/20250221 & 0.049 & 1.0075 & 1.0992 & 1.9589 & Yes \\
NVDA/20250203/20250321 & 0.126 & 1.0099 & 1.1122 & 1.2995 & Yes \\
NVDA/20250303/20250321 & 0.049 & 1.0151 & 1.1848 & 2.6235 & Yes \\
QQQ/20250102/20250117 & 0.041 & 1.0158 & 1.2194 & 3.1084 & Yes \\
QQQ/20250102/20250221 & 0.137 & 1.0006 & 1.0048 & 0.2652 & Yes \\
QQQ/20250102/20250321 & 0.214 & 1.0338 & 1.2922 & 1.5495 & Yes \\
QQQ/20250203/20250221 & 0.049 & 1.0006 & 1.0097 & 0.6259 & Yes \\
QQQ/20250203/20250321 & 0.126 & 1.0859 & 1.8367 & 3.1072 & Yes \\
QQQ/20250303/20250321 & 0.049 & 1.0054 & 1.0717 & 1.6765 & Yes \\
SPY/20250102/20250117 & 0.041 & 1.0053 & 1.0764 & 1.8938 & Yes \\
SPY/20250102/20250221 & 0.137 & 1.0062 & 1.0868 & 1.1028 & Yes \\
SPY/20250102/20250321 & 0.214 & 1.0093 & 1.1191 & 1.0264 & Yes \\
SPY/20250203/20250221 & 0.049 & 1.0043 & 1.0527 & 1.4441 & Yes \\
SPY/20250203/20250321 & 0.126 & 1.0123 & 1.1850 & 1.6417 & Yes \\
SPY/20250303/20250321 & 0.049 & 1.0012 & 1.0196 & 0.8872 & Yes \\
TSLA/20250102/20250117 & 0.041 & 1.0001 & 1.0009 & 0.2067 & Yes \\
TSLA/20250102/20250221 & 0.137 & 1.0005 & 1.0015 & 0.1480 & Yes \\
TSLA/20250102/20250321 & 0.214 & 1.0011 & 1.0037 & 0.1850 & Yes \\
TSLA/20250203/20250221 & 0.049 & 1.0002 & 1.0009 & 0.1901 & Yes \\
TSLA/20250203/20250321 & 0.126 & 1.0004 & 1.0016 & 0.1601 & Yes \\
TSLA/20250303/20250321 & 0.049 & 1.0001 & 1.0013 & 0.2273 & Yes \\
\hline
\end{tabular}
\begin{minipage}{0.90\textwidth}
\footnotesize Notes: The screen reports $N_j^{-1}\sum_i\exp\{0.5\tau_j\widehat\lambda_{ij}^2\}$ with $\widehat\lambda_{ij}=\widehat A_\theta(m_{ij},\tau_j)\widehat\sigma_{ij}$. All reported slice-level values are finite when the screen column is Yes, consistent with the maintained admissibility assumption. This is a finite-sample consistency screen, not a proof of the continuous-time Novikov condition.
\end{minipage}
\end{table}

\begin{table}[!htbp]
\centering
\caption{Approximate recovered $\log(q_\tau/p_\tau)$ diagnostic.}
\label{tab:EBMM_qp_ratio_summary}
\begin{tabular}{lllcccc}
\hline
Symbol & Trade date & Expiration & $\tau$ & 5\% & Median & 95\% \\
\hline
AAPL & 20250102 & 20250117 & 0.041 & -25.404 & -1.086 & 1.541 \\
AAPL & 20250102 & 20250221 & 0.137 & -24.990 & -0.272 & 0.437 \\
AAPL & 20250102 & 20250321 & 0.214 & -26.134 & -0.139 & 0.580 \\
AAPL & 20250203 & 20250221 & 0.049 & -22.948 & -0.278 & 1.113 \\
AAPL & 20250203 & 20250321 & 0.126 & -3.781 & -0.225 & 0.388 \\
AAPL & 20250303 & 20250321 & 0.049 & -24.078 & -0.370 & 1.844 \\
AMZN & 20250102 & 20250117 & 0.041 & -25.418 & -1.320 & 0.512 \\
AMZN & 20250102 & 20250221 & 0.137 & -2.574 & -0.349 & 0.323 \\
AMZN & 20250102 & 20250321 & 0.214 & -1.517 & -0.543 & 0.443 \\
AMZN & 20250203 & 20250221 & 0.049 & -0.819 & -0.255 & 3.796 \\
AMZN & 20250203 & 20250321 & 0.126 & -2.173 & -0.285 & 0.332 \\
AMZN & 20250303 & 20250321 & 0.049 & -1.394 & -0.463 & 0.314 \\
IWM & 20250102 & 20250117 & 0.041 & -0.384 & 0.336 & 23.509 \\
IWM & 20250102 & 20250221 & 0.137 & -0.311 & 1.446 & 8.070 \\
IWM & 20250102 & 20250321 & 0.214 & -0.610 & 0.501 & 6.417 \\
IWM & 20250203 & 20250221 & 0.049 & -14.473 & 0.084 & 23.616 \\
IWM & 20250203 & 20250321 & 0.126 & -21.088 & 0.217 & 6.905 \\
IWM & 20250303 & 20250321 & 0.049 & -21.356 & 0.292 & 23.903 \\
MSFT & 20250102 & 20250117 & 0.041 & -26.585 & -1.800 & 0.527 \\
MSFT & 20250102 & 20250221 & 0.137 & -1.201 & -0.518 & 1.640 \\
MSFT & 20250102 & 20250321 & 0.214 & -26.485 & -0.491 & 0.510 \\
MSFT & 20250203 & 20250221 & 0.049 & -26.163 & -1.586 & 0.457 \\
MSFT & 20250203 & 20250321 & 0.126 & -2.018 & -0.588 & 1.997 \\
MSFT & 20250303 & 20250321 & 0.049 & -24.872 & -1.008 & 0.390 \\
NVDA & 20250102 & 20250117 & 0.041 & -0.889 & 0.171 & 2.570 \\
NVDA & 20250102 & 20250221 & 0.137 & -1.358 & 0.074 & 0.719 \\
NVDA & 20250102 & 20250321 & 0.214 & -0.981 & 0.033 & 1.115 \\
NVDA & 20250203 & 20250221 & 0.049 & -0.204 & 0.324 & 3.492 \\
NVDA & 20250203 & 20250321 & 0.126 & -0.385 & -0.087 & 1.559 \\
NVDA & 20250303 & 20250321 & 0.049 & -0.286 & 0.409 & 3.885 \\
QQQ & 20250102 & 20250117 & 0.041 & -23.224 & -0.006 & 14.427 \\
QQQ & 20250102 & 20250221 & 0.137 & -23.091 & 0.047 & 2.417 \\
QQQ & 20250102 & 20250321 & 0.214 & -25.303 & -0.139 & 1.013 \\
QQQ & 20250203 & 20250221 & 0.049 & -13.945 & 0.373 & 6.012 \\
QQQ & 20250203 & 20250321 & 0.126 & -25.854 & -0.307 & 1.756 \\
QQQ & 20250303 & 20250321 & 0.049 & -23.085 & -0.063 & 6.529 \\
SPY & 20250102 & 20250117 & 0.041 & -0.948 & -0.006 & 23.013 \\
SPY & 20250102 & 20250221 & 0.137 & -22.851 & -0.114 & 4.426 \\
SPY & 20250102 & 20250321 & 0.214 & -0.770 & 0.911 & 4.665 \\
SPY & 20250203 & 20250221 & 0.049 & -21.777 & -0.008 & 23.050 \\
SPY & 20250203 & 20250321 & 0.126 & -23.719 & -0.082 & 5.041 \\
SPY & 20250303 & 20250321 & 0.049 & -21.627 & 0.017 & 23.422 \\
TSLA & 20250102 & 20250117 & 0.041 & -0.930 & -0.120 & 0.348 \\
TSLA & 20250102 & 20250221 & 0.137 & -0.321 & 0.050 & 0.124 \\
TSLA & 20250102 & 20250321 & 0.214 & -0.444 & 0.060 & 0.182 \\
TSLA & 20250203 & 20250221 & 0.049 & -0.738 & -0.195 & 0.182 \\
TSLA & 20250203 & 20250321 & 0.126 & -0.656 & -0.123 & 0.203 \\
TSLA & 20250303 & 20250321 & 0.049 & -0.594 & -0.031 & 1.112 \\
\hline
\end{tabular}
\begin{minipage}{0.88\textwidth}
\footnotesize Notes: The ratio is a diagnostic proxy. The risk-neutral density is recovered by applying the Breeden--Litzenberger second-derivative relation to the arbitrage-penalized SVI Black--Scholes call curve on each synchronized slice. The physical density uses FMPCloud adjusted daily stock history when configured, then ThetaData stock history if available, and otherwise the underlying-price series embedded in the option-implied-volatility endpoint. The physical density is estimated as a 50/50 blend of overlapping-horizon kernel density estimates and an EWMA filtered-innovation bootstrap. This is an upgraded slice-level recovery, but not a full production surface with global calendar-spread constraints. Large upper-tail log ratios, especially at short maturities, flag tail singularities where $\widehat p_\tau$ is small and should be read as a warning about the diagnostic rather than as structural evidence.
\end{minipage}
\end{table}

\begin{table}[!htbp]
\centering
\caption{Density-ratio tail-instability diagnostics.}
\label{tab:EBMM_density_tail_diagnostics}
\begin{tabular}{lccccc}
\hline
Slice & $\tau$ & 99\% $\log(q/p)$ & Max $q/p$ & Min $\widehat p$ & Flag \\
\hline
AAPL/20250102/20250117 & 0.041 & 2.015 & 1.96e+01 & 4.83e-04 & No \\
AAPL/20250102/20250221 & 0.137 & 0.495 & 1.65e+00 & 2.57e-02 & No \\
AAPL/20250102/20250321 & 0.214 & 0.610 & 1.84e+00 & 5.31e-02 & No \\
AAPL/20250203/20250221 & 0.049 & 1.749 & 6.63e+00 & 5.63e-04 & No \\
AAPL/20250203/20250321 & 0.126 & 0.430 & 1.54e+00 & 2.89e-02 & No \\
AAPL/20250303/20250321 & 0.049 & 2.507 & 1.41e+01 & 5.63e-04 & No \\
AMZN/20250102/20250117 & 0.041 & 0.673 & 1.98e+00 & 1.35e-02 & No \\
AMZN/20250102/20250221 & 0.137 & 0.357 & 1.43e+00 & 1.53e-01 & No \\
AMZN/20250102/20250321 & 0.214 & 0.468 & 1.60e+00 & 2.22e-01 & No \\
AMZN/20250203/20250221 & 0.049 & 4.961 & 1.89e+02 & 2.43e-02 & No \\
AMZN/20250203/20250321 & 0.126 & 0.355 & 1.43e+00 & 1.42e-01 & No \\
AMZN/20250303/20250321 & 0.049 & 0.364 & 1.44e+00 & 2.43e-02 & No \\
IWM/20250102/20250117 & 0.041 & 23.919 & 2.75e+10 & 1.00e-12 & Yes \\
IWM/20250102/20250221 & 0.137 & 12.330 & 8.43e+05 & 8.67e-07 & No \\
IWM/20250102/20250321 & 0.214 & 8.270 & 4.09e+03 & 3.41e-04 & No \\
IWM/20250203/20250221 & 0.049 & 23.926 & 2.68e+10 & 1.00e-12 & Yes \\
IWM/20250203/20250321 & 0.126 & 9.964 & 1.09e+05 & 2.09e-07 & No \\
IWM/20250303/20250321 & 0.049 & 24.434 & 4.69e+10 & 1.00e-12 & Yes \\
MSFT/20250102/20250117 & 0.041 & 0.840 & 2.37e+00 & 1.25e-02 & No \\
MSFT/20250102/20250221 & 0.137 & 2.363 & 1.23e+01 & 1.01e-01 & No \\
MSFT/20250102/20250321 & 0.214 & 0.531 & 1.70e+00 & 1.55e-01 & No \\
MSFT/20250203/20250221 & 0.049 & 0.543 & 1.73e+00 & 2.22e-02 & No \\
MSFT/20250203/20250321 & 0.126 & 2.375 & 1.15e+01 & 9.96e-02 & No \\
MSFT/20250303/20250321 & 0.049 & 0.452 & 1.58e+00 & 2.22e-02 & No \\
NVDA/20250102/20250117 & 0.041 & 2.779 & 1.75e+01 & 1.95e-03 & No \\
NVDA/20250102/20250221 & 0.137 & 0.749 & 2.12e+00 & 8.71e-02 & No \\
NVDA/20250102/20250321 & 0.214 & 1.368 & 4.21e+00 & 9.33e-02 & No \\
NVDA/20250203/20250221 & 0.049 & 5.393 & 3.01e+02 & 2.50e-04 & No \\
NVDA/20250203/20250321 & 0.126 & 1.725 & 6.05e+00 & 6.71e-02 & No \\
NVDA/20250303/20250321 & 0.049 & 5.823 & 4.62e+02 & 2.50e-04 & No \\
QQQ/20250102/20250117 & 0.041 & 22.566 & 6.74e+09 & 1.00e-12 & Yes \\
QQQ/20250102/20250221 & 0.137 & 2.598 & 1.37e+01 & 4.34e-03 & No \\
QQQ/20250102/20250321 & 0.214 & 1.076 & 3.36e+00 & 1.61e-02 & No \\
QQQ/20250203/20250221 & 0.049 & 16.316 & 4.31e+08 & 1.00e-12 & Yes \\
QQQ/20250203/20250321 & 0.126 & 2.011 & 7.64e+00 & 3.96e-03 & No \\
QQQ/20250303/20250321 & 0.049 & 16.847 & 7.35e+08 & 1.00e-12 & Yes \\
SPY/20250102/20250117 & 0.041 & 23.131 & 1.15e+10 & 1.00e-12 & Yes \\
SPY/20250102/20250221 & 0.137 & 11.170 & 7.18e+05 & 1.00e-12 & Yes \\
SPY/20250102/20250321 & 0.214 & 6.396 & 1.28e+03 & 1.24e-04 & No \\
SPY/20250203/20250221 & 0.049 & 23.308 & 1.42e+10 & 1.00e-12 & Yes \\
SPY/20250203/20250321 & 0.126 & 10.672 & 5.22e+05 & 1.00e-12 & Yes \\
SPY/20250303/20250321 & 0.049 & 23.746 & 2.25e+10 & 1.00e-12 & Yes \\
TSLA/20250102/20250117 & 0.041 & 0.646 & 2.12e+00 & 1.47e-02 & No \\
TSLA/20250102/20250221 & 0.137 & 0.145 & 1.19e+00 & 2.62e-01 & No \\
TSLA/20250102/20250321 & 0.214 & 0.185 & 1.20e+00 & 4.12e-01 & No \\
TSLA/20250203/20250221 & 0.049 & 0.190 & 1.21e+00 & 3.23e-02 & No \\
TSLA/20250203/20250321 & 0.126 & 0.209 & 1.23e+00 & 2.60e-01 & No \\
TSLA/20250303/20250321 & 0.049 & 1.267 & 3.59e+00 & 3.23e-02 & No \\
\hline
\end{tabular}
\begin{minipage}{0.90\textwidth}
\footnotesize Notes: The flag is triggered when the 99\% log-ratio exceeds $\log(10^6)$ or the estimated physical density falls below $10^{-8}$. These diagnostics are used to identify tail singularities before interpreting raw $q/p$ fits.
\end{minipage}
\end{table}

\singlespacing
\bibliographystyle{chicago}      
\section*{}
\addcontentsline{toc}{section}{References} 
\bibliography{EBMMBIB}           
\end{document}